\pgfplotsset{compat=1.4}
\newcommand{\dilation}{\mbox{\tt d}}
\newcommand{\congestion}{\mbox{\tt c}}
\newcommand{\eps}{\varepsilon}
\newcommand{\Oish}{\widetilde{O}}
\newcommand{\rr}{\mathbb{R}}
\DeclareMathOperator{\dist}{dist}
\DeclareMathOperator{\srp}{\textsc{subset-rp}}
\newcommand{\poly}{\operatorname{poly}}
\newtheorem{theorem}{Theorem}
\newtheorem{lemma}[theorem]{Lemma}
\newtheorem{corollary}[theorem]{Corollary}
\newtheorem{definition}[theorem]{Definition}
\def\depth{\mbox{\tt depth}}
\def\NSource{\sigma}
\def\Root{\mbox{\tt r}}
\def\Leaf{\mbox{\tt Leaf}}
\def\NLeaf{\mbox{\tt nLeaf}}
\def\NodesIn{\mbox{\tt N}}
\def\LAB{\mbox{\tt Label}}
\newtheorem*{definition*}{Definition}
\newtheorem*{theorem*}{Theorem}
\newtheorem{observation}{Observation}
\newtheorem*{lemma*}{Lemma}
\title{Restorable Shortest Path Tiebreaking for Edge-Faulty Graphs}
\author{Greg Bodwin\thanks{Supported by NSF:AF 2153680}\\University of Michigan EECS\\\texttt{bodwin@umich.edu}
\and
Merav Parter\\Weizmann Institute of Science\\\texttt{merav.parter@weizmann.ac.il}}
\date{}
\begin{document}


\maketitle
\thispagestyle{empty}

\begin{abstract}
The \emph{restoration lemma} by Afek, Bremler-Barr, Kaplan, Cohen, and Merritt [Dist.\ Comp.\ '02] proves that, in an undirected unweighted graph, any replacement shortest path avoiding a failing edge can be expressed as the concatenation of two original shortest paths.
However, the lemma is \emph{tiebreaking-sensitive}: if one selects a particular canonical shortest path for each node pair, it is no longer guaranteed that one can build replacement paths by concatenating two \emph{selected} shortest paths.
They left as an open problem whether a method of shortest path tiebreaking with this desirable property is generally possible.

We settle this question affirmatively with the first general construction of \emph{restorable tiebreaking schemes}.
We then show applications to various problems in fault-tolerant network design.
These include a faster algorithm for subset replacement paths, more efficient fault-tolerant (exact) distance labeling schemes, fault-tolerant subset distance preservers and $+4$ additive spanners with improved sparsity, and fast distributed algorithms that construct these objects.
For example, an almost immediate corollary of our restorable tiebreaking scheme is the first nontrivial distributed construction of sparse fault-tolerant distance preservers resilient to \emph{three} faults.
\end{abstract}


\tableofcontents

\clearpage

\setcounter{page}{1}

\section{Introduction}

This paper builds on a classic work of Afek, Bremler-Barr, Kaplan, Cohen, and Merritt from 2002, which initiated a theory of \emph{shortest path restoration} in graphs \cite{ABKCM02}.
The premise is that one has a network, represented by a graph $G$, and one has computed its shortest paths and stored them in a routing table.
But then, an edge in the graph breaks, rendering some of the paths unusable.
We want to efficiently \emph{restore} these paths, changing the table to reroute them along a new shortest path between the same endpoints in the surviving graph.
An ideal solution will both avoid recomputing shortest paths from scratch and only require easy-to-implement changes to the routing table.

Motivated by the fact that the multiprotocol label switching (MPLS) allows for efficient concatenation of paths, Afek et al~\cite{ABKCM02} developed the following elegant structure theorem for the problem, called the \emph{restoration lemma}.
All graphs in this discussion are undirected and unweighted.

\begin{theorem} [Restoration Lemma \cite{ABKCM02}]
For any graph $G = (V, E)$, vertices $s,t \in V$, and failing edge $e \in E$, there exists a vertex $x$ and a replacement shortest $s \leadsto t$ path avoiding $e$ that is the concatenation of two original shortest paths $\pi(s, x), \pi(t, x)$ in $G$. 
\end{theorem}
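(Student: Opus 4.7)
The plan is to start from an arbitrary replacement shortest path $P: s = v_0, v_1, \ldots, v_k = t$ in $G \setminus e$ (with $k = \dist_{G \setminus e}(s,t)$), and identify an index $i^*$ at which the prefix $P[s, v_{i^*}]$ and suffix $P[v_{i^*}, t]$ are both shortest paths in the original graph $G$. Such an $i^*$ immediately witnesses the theorem: take $x = v_{i^*}$, $\pi(s,x) = P[s, v_{i^*}]$, and $\pi(t,x) = P[v_{i^*}, t]$, whose concatenation is precisely $P$.

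The natural candidate is $i^* := \max\{\, i : \dist_G(s, v_i) = i \,\}$, i.e.\ the last index along $P$ at which the prefix is still a shortest path in $G$; this set is nonempty because $i=0$ qualifies trivially. If $i^* = k$, then $P$ itself is already a shortest $s$-$t$ path in $G$ and we are done (choose $x=t$), so assume $i^* < k$. By maximality, $\dist_G(s, v_{i^*+1}) \leq i^*$, and the remaining goal is to prove $\dist_G(v_{i^*}, t) = k - i^*$. I plan to establish this by contradiction: assume $\dist_G(v_{i^*}, t) < k - i^*$. Two preliminary observations fall out immediately. First, every shortest $v_{i^*}$-$t$ path in $G$ must traverse $e$, since otherwise its concatenation with $P[s, v_{i^*}]$ (which avoids $e$) would give an $s$-$t$ walk in $G \setminus e$ of length strictly less than $k$. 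Second, by the symmetric argument using $P[v_{i^*+1}, t]$, every shortest $s$-$v_{i^*+1}$ path in $G$ must also traverse $e$.

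The hard part is to extract the final contradiction from these two forced-through-$e$ paths. Fix a shortest path $\pi''$ from $v_{i^*}$ to $t$ and a shortest path $\pi'$ from $s$ to $v_{i^*+1}$, and write $e = (u,w)$ oriented so that $\pi''$ crosses $e$ in the direction $u \to w$. I would split on whether $\pi'$ crosses $e$ in the same direction or the opposite one. In the \emph{opposite} case, concatenating $\pi'[s, w]$ with $\pi''[w, t]$ directly yields an $s$-$t$ walk in $G \setminus e$ of length at most $(|\pi'|-1) + (|\pi''|-1) \leq (i^* - 1) + (k - i^* - 2) = k-3$. In the \emph{same} case, I would instead stitch together the four $e$-avoiding segments $\pi'[s, u]$, the reverse of $\pi''[v_{i^*}, u]$, the reverse of $\pi'[w, v_{i^*+1}]$, and $\pi''[w, t]$, using the edge $(v_{i^*}, v_{i^*+1}) \in P$ (which is not $e$) as a bridge; the resulting walk has length $|\pi'| + |\pi''| - 1 \leq i^* + (k - i^* - 1) - 1 = k-2$. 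Either way, we obtain an $s$-$t$ walk in $G \setminus e$ strictly shorter than $k$, contradicting the definition of $k$. I expect the two-case orientation analysis and choosing the right bridging segments to be the most delicate step, but the length arithmetic is mechanical once the correct gluing is written down.
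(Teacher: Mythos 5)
Your proof is correct, and the argument is sound throughout. One clarifying remark first: the paper cites this result from Afek et al.~\cite{ABKCM02} without reproducing a proof, so there is no ``paper's own proof'' to compare against; I am assessing your argument on its own.

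Your strategy --- take $i^* = \max\{i : \dist_G(s, v_i) = i\}$, and show by contradiction that $P[v_{i^*}, t]$ must also be a shortest path in $G$ --- is a natural and correct route, and I verified all the steps. In particular: (i) the two ``forced-through-$e$'' observations are right (each uses a concatenation with an $e$-avoiding segment of $P$ of the correct length, giving a walk in $G \setminus e$ of length at most $k-1 < k$); (ii) in the opposite-orientation case, $\pi'[s,w]$ and $\pi''[w,t]$ both avoid $e$ and the length bound $\le k-3$ holds; (iii) in the same-orientation case, the five-segment stitching $s \to u \to v_{i^*} \to v_{i^*+1} \to w \to t$ avoids $e$ and has length exactly $|\pi'| + |\pi''| - 1 \le k-2$, since each of $\pi'$, $\pi''$ loses one edge (namely $e$) and the bridge edge $(v_{i^*}, v_{i^*+1}) \in P$ contributes one back. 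Both cases yield an $s$--$t$ walk in $G \setminus e$ of length $< k$, contradicting $\dist_{G\setminus e}(s,t) = k$. Degenerate cases (e.g.\ $i^* = 0$ or $i^* = k-1$) simply make the contradiction arrive earlier and do not break anything. This matches the general flavor of the original Afek et al.\ argument --- pick a crossover vertex on the replacement path and show both sides are already shortest in $G$ by a stitching contradiction --- so there is nothing to flag.

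One small optional simplification you may wish to note: in an unweighted undirected graph, if $e = \{u,w\}$ lies on some shortest path from $s$, then $|\dist_G(s,u) - \dist_G(s,w)| = 1$, and in fact both $\pi'$ and $\pi''$ must traverse $e$ in the \emph{same} orientation (the one going away from $s$, toward $t$), so the ``opposite'' case is vacuous. Carrying both cases as you did is perfectly fine and arguably more robust, but the observation shortens the write-up.
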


We remark that some versions of this lemma were perhaps implicit in prior work, e.g., \cite{HS01}.
The restoration lemma itself has proved somewhat difficult to apply directly, and most applications of this theory use weaker variants instead (e.g., \cite{bodwin2017preserving, BiloCG0PP18, ChechikCFK17}).
The issue is that the restoration lemma is \emph{tiebreaking-sensitive}, in a sense that we next explain.

To illustrate, let us try a naive attempt at applying the restoration lemma.
One might try to restore a shortest path $\pi(s, t)$ under a failing edge $e$ by searching over all possible midpoint nodes $x$, concatenating the existing shortest paths $\pi(s, x), \pi(t, x)$, and then selecting the replacement $s \leadsto t$ path to be the shortest among all concatenated paths that avoid $e$.
It might seem that the restoration lemma promises that one such choice of midpoint $x$ will yield a valid replacement shortest path.
But this isn't quite right: the restoration lemma only promises that \emph{there exist} two shortest paths of the form $\pi(s, x), \pi(t, x)$ whose concatenation forms a valid replacement path.
Generally there can be many tied-for-shortest $s \leadsto x$ and $t \leadsto x$ paths, and in designing the initial routing table we implicitly broke ties to select just one of them.
The bad case is when for the proper midpoint node $x$, we select the canonical shortest $s \leadsto x$ path to be one that uses the failing edge $f$ (even though the restoration lemma promises that a \emph{different} $s \leadsto x$ shortest path avoids $f$), and thus this restoration-by-concatenation algorithm wrongly discards $x$ as a potential midpoint node.

\begin{figure} [h] \centering
\begin{tikzpicture}
\draw [ultra thick] (0, 0) -- (10, 0);

\node [red] at (5, 0) {\bf \Huge $\times$};
\node [red] at (5, -0.5) {$f$};
\draw [fill=black] (4.5, 0) circle [radius=0.15];
\draw [fill=black] (5.5, 0) circle [radius=0.15];

\draw [ultra thick] (0, 0) to[bend left=20] (5, 2);
\draw [ultra thick] (10, 0) to[bend right=20] (5, 2);

\node [rotate=-18] at (7.5, 1.8) {\small original shortest path};

\draw [line width = 3, blue, ->] plot [smooth] coordinates {(0, 0) (4.5, 0) (5.5, 0) (5, 2)};
\draw [line width = 3, blue, ->] (0, 0) to[bend left=20] (5, 2);

\draw [ultra thick, fill=white] (0, 0) circle [radius=0.3cm];
\node at (0, 0) {$s$};

\draw [ultra thick, fill=white] (10, 0) circle [radius=0.3cm];
\node at (10, 0) {$t$};

\draw [ultra thick, fill=white] (5, 2) circle [radius=0.3cm];

\draw [blue] (2.5, 1.5) -- (3, 0.75) -- (3, 0);

\node [blue, fill=white, rotate=10] at (3, 0.75) {\small tied for shortest};
\node at (5, 2) {$x$};
\end{tikzpicture}
\caption{The restoration lemma is ``tiebreaking-sensitive'' in the sense that there could be several tied-for-shortest paths between $s$ and the midpoint node $x$.  The restoration lemma promises that \emph{one} such path avoids the failing edge $f$, but the $s \leadsto x$ shortest path \emph{selected by the routing table} might still use $f$.}
\end{figure}

So, is it possible to break shortest path ties in such a way that this restoration-by-concatenation method works?
Afek et al.~\cite{ABKCM02} discussed this question extensively, and gave a partial negative resolution: when the input graph is a $4$-cycle, one cannot select \emph{symmetric} shortest paths to enable the method (for completeness, we include a formal proof in Theorem \ref{thm:impossible} in the appendix).
By ``symmetric'' we mean that, for all nodes $s, t$, the selected $s \leadsto t$ and $t \leadsto s$ shortest paths are the same.
However, Afek et al.~\cite{ABKCM02} also point out that the MPLS protocol is inherently asymmetric, and so in principle one can choose different $s \leadsto t$ and $t \leadsto s$ shortest paths.
They left as a central open question whether the restoration lemma can be implemented by an asymmetric tiebreaking scheme (see their remark at the bottom of page 8).
In the meantime, they showed that one can select a larger ``base set'' of $O(mn)$ paths\footnote{More precisely, their base set is generated by first choosing an \emph{arbitrary} set of $n \choose 2$ canonical shortest paths, and then taking every possible path in the base set that consists of a canonical shortest path concatenated with a single extra edge on either end.  From this, we can compute a more precise upper bound on base set size of $\le m(n-1)$.  Correctness of this base set can be proved using Theorem \ref{thm:wtdrestorationintro}.} such that one can restore shortest paths by concatenating two of these paths, and they suggested as an intermediate open question whether their base set size can be improved.
This method has found applications in network design (e.g., \cite{ABKCM02, bodwin2017preserving, BiloCG0PP18, ChechikCFK17}), but these applications tend to pay an overhead associated to the larger base set size.

Despite this, the main result of this paper is a positive resolution of the question left by Afek et al.~\cite{ABKCM02}: we prove that asymmetry is indeed enough to allow restorable tiebreaking in every graph.

\begin{theorem} [Main Result] \label{thm:intromain}
In any graph $G$, one can select a \textbf{single} shortest path for each \textbf{ordered} pair of vertices such that, for any pair of vertices $s, t$ and a failing edge $e$ such that an $s \leadsto t$ path remains in $G \setminus \{e\}$, there is a vertex $x$ and a replacement shortest $s \leadsto t$ path avoiding $e$ that is the concatenation of the selected path $\pi(s, x)$ and the reverse of the selected path $\pi(t, x)$.
\end{theorem}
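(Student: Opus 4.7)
The plan is to construct the tiebreaking scheme via an \emph{asymmetric} lexicographic rule. Fix an arbitrary linear order $\prec$ on $V$. For each ordered pair $(s, t)$, I would define $\pi(s, t)$ to be the shortest $s \leadsto t$ path whose vertex sequence, read from $s$ to $t$, is lex-smallest under $\prec$. This rule is asymmetric because the lex criterion is applied from the source side, so $\pi(s, t)$ need not equal the reverse of $\pi(t, s)$---the asymmetry that the $4$-cycle impossibility of Afek et al.\ shows to be necessary. The rule also satisfies \emph{source consistency}: any prefix of $\pi(s, t) = (s = u_0, u_1, \ldots, u_k = t)$ coincides with the selected path from $s$ to the corresponding intermediate vertex, i.e., $\pi(s, u_i) = (u_0, \ldots, u_i)$, by a standard exchange argument.

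To verify the restoration property, fix any $s, t$ and failing edge $e$ on $\pi(s, t)$. The restoration lemma guarantees some replacement shortest path $P$ in $G \setminus \{e\}$ together with a midpoint $x$ such that the prefix $P[s, x]$ is a shortest $s \leadsto x$ path in $G$ and the suffix $P[x, t]$ reversed is a shortest $t \leadsto x$ path in $G$. My goal is to show that for a suitable choice of $x$, the \emph{selected} paths $\pi(s, x)$ and $\pi(t, x)$ can be used in place of these shortest paths---that is, $\pi(s, x)$ concatenated with the reverse of $\pi(t, x)$ forms a valid replacement shortest path from $s$ to $t$ avoiding $e$. A natural strategy is to pick $x$ extremally: the latest vertex along $P$ at which $P[s, x]$ equals $\pi(s, x)$. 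Source consistency guarantees this extremal $x$ is well-defined (at minimum $x = s$ qualifies), so the $s$-side half of the concatenation is correct by construction.

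The main obstacle lies in the $t$-side alignment, since $\pi(t, x)$ is chosen by an independent source-side rule (with $t$ as source) and need not equal the reverse of $P[x, t]$. My plan to overcome this is an interchange argument: if the reverse of $\pi(t, x)$ were to pass through $e$, then examine the first vertex $y$ where $\pi(t, x)$ meets $P[x, t]$ (reading $\pi(t, x)$ from $t$), and reroute $\pi(t, x)$ via $P$ between $y$ and $x$. This reroute produces an equally short $t \leadsto x$ path whose vertex sequence from $t$ is lex-smaller than that of $\pi(t, x)$, contradicting lex-minimality and forcing $\pi(t, x)$ to avoid $e$. If the interchange argument does not directly go through for the chosen extremal $x$, the backup is to shift $x$ toward $t$ along $P$ and iterate, or to refine the tiebreaking rule by coupling source-side and destination-side lex orders so that the $t$-side alignment is guaranteed globally. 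Once the alignment holds for every triple $(s, t, e)$, Theorem~\ref{thm:intromain} follows directly from the definition of the scheme.
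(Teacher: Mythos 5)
There is a genuine gap: your core construction (source-side lexicographic tiebreaking under a fixed vertex order $\prec$) is not restorable, and the $4$-cycle already kills it. For any pair at distance $2$, the lex rule picks the $\prec$-minimal midpoint, and this is the \emph{same} midpoint whether you read the pair as $(a,b)$ or $(b,a)$; so on distance-$\le 2$ pairs your scheme is effectively symmetric. But the paper's $C_4$ impossibility argument (Theorem \ref{thm:impossible}) only ever inspects paths of length $1$ and $2$: the selected paths between the two opposite corner pairs must share an edge, say $(s,t)$, and then when $(s,t)$ fails, every candidate midpoint $x$ on the unique replacement path $(s,x,y,t)$ forces one of the two selected legs ($\pi(s,y)$ or $\pi(t,x)$) to pass through the failed edge. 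Concretely, on the cycle $s\mbox{-}x\mbox{-}y\mbox{-}t\mbox{-}s$ with order $s \prec t \prec x \prec y$, your rule gives $\pi(s,y)=(s,t,y)$ and $\pi(t,x)=(t,s,x)$, and no concatenation of two selected paths restores $s \leadsto t$ after $(s,t)$ fails. So the ``asymmetry'' of reading lex from the source is not the kind of asymmetry the theorem needs; the fallback you mention (``shift $x$'' or ``couple source-side and destination-side lex orders'') is exactly the unresolved part, not a detail. Relatedly, your interchange argument is not sound as stated: rerouting $\pi(t,x)$ along $P$ between $y$ and $x$ gives a path of the same unweighted length, but there is no reason its vertex sequence is lex-smaller, so no contradiction with lex-minimality is obtained.

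The paper's proof fixes precisely this point by making the tiebreaking \emph{antisymmetric} rather than merely asymmetric: each undirected edge $\{u,v\}$ is split into arcs with perturbed weights $1+r(u,v)$ and $1+r(v,u)$ where $r(v,u)=-r(u,v)$, and $\pi(s,t\mid F)$ is the unique shortest directed path in the reweighted graph. Restorability reduces to $|F|=1$; taking $x$ to be the last vertex on $\pi(s,t\mid f)$ with $f\notin\pi(s,x)$ and $y$ its successor, the fact that $\pi(s,y)$ uses $f=(u,v)$ yields $\dist^*(u,v)+\dist^*(v,y)<\dist^*(u,x)+\dist^*(x,y)$, and the identities $r(u,v)=-r(v,u)$, $r(x,y)=-r(y,x)$ let one flip this into $\dist^*(v,y)+\dist^*(y,x)<\dist^*(v,u)+\dist^*(u,x)$, i.e.\ $(v,u)\notin\pi(v,x)$; consistency of unique shortest paths then gives $(v,u)\notin\pi(t,x)$, so $\pi(s,x)\cup\pi(t,x)$ is a valid replacement path. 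This sign-flip is exactly the mechanism your lex rule lacks: preferring to traverse $f$ in one direction must force a strict preference \emph{against} traversing it in the other. If you want to salvage your approach, you would need a tiebreaking rule with this antisymmetry property (e.g.\ the paper's perturbation, which can even be chosen deterministically as long as it is antisymmetric and yields unique shortest paths), not a refinement of source-side lexicographic order.
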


We emphasize again that this theorem is possible only because we select independent shortest path for each \emph{ordered} pair of vertices, and thus asymmetry is allowed.
The shortest path tiebreaking method used in Theorem \ref{thm:intromain} has a few other desirable properties, outlined in Section \ref{sec:tiebreaking}.
Most importantly it is \emph{consistent}, which implies that the selected paths have the right structure to be encoded in a routing table.
It can also be efficiently computed, using a single call to any APSP algorithm that can handle directed weighted input graphs.

We next overview some of our applications of this theorem in algorithms and network design.
We will not specifically revisit the original application in \cite{ABKCM02} to the MPLS routing protocol, but let us briefly discuss the interaction between Theorem \ref{thm:intromain} and this protocol.
Note that Theorem \ref{thm:intromain} builds a $s \leadsto t$ replacement path by concatenating two paths of the form $\pi(s, x), \pi(t, x)$, which are directed towards a middle vertex $x$.
Since the MPLS protocol can efficiently concatenate \emph{oriented} paths (e.g., of the form $\pi(s, x), \pi(x, t)$), one would likely apply our theorem in this context by carrying two routing tables, one of which encodes our tiebreaking scheme $\pi$ and the other of which encodes its reverse $\overline{\pi}$ (i.e., $\pi(s, t) =: \overline{\pi}(t, s)$).
An $s \leadsto t$ replacement path would be computed by scanning over midpoint nodes $x$, and considering paths formed by concatenating the $s \leadsto x$ shortest path from the first routing table with the $x \leadsto t$ shortest path from the second routing table.
For more details on the MPLS protocol and its relationship to this method of path restoration, we refer to \cite{ABKCM02}.

\subsection{Applications}

\paragraph{Replacement Path Algorithms.}

Our first applications of our restorable shortest path tiebreaking are to computation of replacement paths.
The problem has been extensively studied in the \emph{single-pair} setting, where the input is a graph $G = (V, E)$ and a vertex pair $s, t$, and the goal is to report $\dist_{G \setminus \{e\}}(s, t)$ for every edge $e$ along a shortest $s \leadsto t$ path.
The single-pair setting can be solved in $\Oish(m+n)$ time \cite{HS01, malik1989k}.
Recently, Chechik and Cohen \cite{CC19} introduced the \emph{sourcewise} setting, in which one wants to solve the problem for all pairs in $\{s\} \times V$ simultaneously.
They gave an algorithm with $\Oish\left(m \sqrt{n} + n^2 \right)$ runtime on an $n$-node, $m$-edge graph, and they showed that this runtime is optimal (up to hidden polylog factors) under the Boolean Matrix Multiplication conjecture.
This was subsequently generalized to the $S \times V$ setting by Gupta, Jain, and Modi \cite{GJM20}.

We study the natural \emph{subsetwise} version of the problem, $\srp$, where one is given a graph $G$ and a vertex subset $S$, and the goal is to solve the replacement path problem simultaneously for all pairs in $S \times S$.
We prove:
\begin{theorem}
Given an $n$-vertex, $m$-edge undirected unweighted graph $G$ and $|S| = \sigma$ source vertices, there is a centralized algorithm that solves solves $\srp$ in $O(\sigma m)$ + $\Oish(\sigma^2 n)$ time.
\end{theorem}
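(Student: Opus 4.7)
The plan is a two-phase algorithm. A preprocessing phase builds the restorable tiebreaking trees from every source, costing $O(\sigma m)$ total. A query phase handles each ordered pair $(s,t) \in S \times S$ in $\Oish(n)$ time by reducing the replacement-path task to a two-dimensional range-minimum problem, yielding the $\Oish(\sigma^2 n)$ second term.

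In preprocessing, for each $s \in S$ I would compute the selected shortest-path tree $T_s$ rooted at $s$ guaranteed by Theorem~\ref{thm:intromain}, in $O(m+n)$ time per source, totaling $O(\sigma m)$. In the same pass I would store the distances $d(s,\cdot)$ and DFS entry/exit labels $\phi_s : V \to \{1,\ldots,2n\}$ so that membership in any subtree of $T_s$ reduces to membership in an integer interval.

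In the query phase, fix an ordered pair $(s,t)$ and write the selected path as $\pi(s,t)=(v_0,\ldots,v_K)$ with $K\le n-1$ and $e_i := v_{i-1}v_i$. Theorem~\ref{thm:intromain} yields the exact identity
\[
\dist_{G\setminus e_i}(s,t) \;=\; \min_{x \in V_i}\bigl(d(s,x)+d(t,x)\bigr), \qquad V_i := \{x \in V : e_i \notin \pi(s,x) \cup \pi(t,x)\},
\]
since any $x \in V_i$ yields an $s\leadsto t$ walk of length $d(s,x)+d(t,x)$ avoiding $e_i$, while Theorem~\ref{thm:intromain} supplies an $x \in V_i$ achieving the replacement distance. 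The conditions defining $V_i$ are tree-topological: $e_i \in \pi(s,x)$ iff $x$ lies in the $T_s$-subtree rooted at $v_i$, iff $\phi_s(x) \in [a_i,b_i]$; and $e_i \in \pi(t,x)$ iff $e_i$ is an edge of $T_t$ and $x$ lies in the $T_t$-subtree below $e_i$, iff $\phi_t(x) \in [c_i,d_i]$ (the latter condition vacuous when $e_i \notin T_t$). Hence $V_i$ is the complement of the axis-aligned rectangle $[a_i,b_i]\times[c_i,d_i]$ in the $(\phi_s,\phi_t)$-plane, and the query asks for the minimum weight $f(x) := d(s,x)+d(t,x)$ over points outside this rectangle.

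I would build a standard two-dimensional range-minimum structure on the $n$ weighted points $\{(\phi_s(x),\phi_t(x),f(x))\}$ in $\Oish(n)$ time per pair; the complement of a rectangle decomposes into at most four axis-aligned strips, each answerable by an $O(\log^2 n)$ query, giving $\Oish(1)$ per failing edge. Per pair the query phase costs $\Oish(n)$, summing to $\Oish(\sigma^2 n)$. The main obstacle is purely data-structural: once the tree-topological characterization of $V_i$ is in hand, what remains is a textbook rectangle-complement 2D range minimum. The conceptually interesting shortest-path-restoration content has been absorbed by Theorem~\ref{thm:intromain}, which is precisely why the algorithm can beat previous subsetwise bounds that relied on the $O(mn)$-sized base set of Afek et al.
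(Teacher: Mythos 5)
Your high-level plan—build the restorable-tiebreaking shortest path trees $T_s$ for each $s \in S$ in $O(\sigma m)$ total, then spend $\Oish(n)$ per ordered pair—matches the paper's, and the identity
\[
\dist_{G\setminus e_i}(s,t)=\min_{x\in V_i}\bigl(d(s,x)+d(t,x)\bigr),
\qquad V_i=\{x:\ e_i\notin\pi(s,x)\text{ and }e_i\notin\pi(t,x)\},
\]
is correct: every $x\in V_i$ yields an $e_i$-avoiding walk of length $d(s,x)+d(t,x)$, and Theorem~\ref{thm:intromain} supplies a witness achieving equality. From there you take a genuinely different route than the paper. The paper simply runs the $\Oish(m+n)$ single-pair replacement-path algorithm (Theorem~\ref{thm:singlepairalg}) on the $O(n)$-edge graph $T_{s_1}\cup T_{s_2}$, letting restorability guarantee that this small subgraph already contains a valid replacement path for every failure. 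You instead make the restorability identity explicit and attack it with a 2D range-minimum structure. Both are $\Oish(n)$ per pair; the paper's is a cleaner black-box reduction, while yours exposes the tree-topological structure.

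There is, however, a genuine error in your geometric characterization of $V_i$. The condition defining $V_i$ is a \emph{conjunction}: $e_i\notin\pi(s,x)$ \emph{and} $e_i\notin\pi(t,x)$, i.e.\ $\phi_s(x)\notin[a_i,b_i]$ \emph{and} $\phi_t(x)\notin[c_i,d_i]$. This is not the complement of the rectangle $[a_i,b_i]\times[c_i,d_i]$; the complement of the rectangle is the \emph{disjunction} (at least one coordinate outside its interval). Concretely, the complement of the rectangle contains points $x$ with $\phi_s(x)\in[a_i,b_i]$ but $\phi_t(x)\notin[c_i,d_i]$, i.e.\ $e_i\in\pi(s,x)$ while $e_i\notin\pi(t,x)$. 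For such $x$ the concatenated walk still traverses $e_i$ (on the $\pi(s,x)$ side), and $d(s,x)+d(t,x)$ can be strictly less than $\dist_{G\setminus e_i}(s,t)$—for instance $x=t$ gives $d(s,t)+0$, an underestimate whenever $e_i$ lies on $\pi(s,t)$. So minimizing over the rectangle-complement as you describe would report wrong distances. The fix is small: the correct region is the intersection of the two strip-complements, which decomposes into the four corner quadrants $(\phi_s<a_i$ or $\phi_s>b_i)\times(\phi_t<c_i$ or $\phi_t>d_i)$; each is an axis-aligned rectangle, so your 2D range-minimum structure still answers it in polylogarithmic time and the stated complexity is unaffected once the region is corrected.
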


We remark that, in the case where most pairs $s, t \in S$ have $\dist_G(s, t) = \Omega(n)$, the latter term in the runtime $\sigma^2 n$ is the time required to write down the output.
So this term is unimprovable, up to the hidden $\log$ factors.
The leading term of $\sigma m$ is likely required for any ``combinatorial'' algorithm to compute single-source shortest paths even in the non-faulty setting.
That is: $\sigma m$ is the time to run BFS search from $\sigma$ sources, and it is widely believed \cite{VW10} that multi-source BFS search is the fastest algorithm to compute $S \times S$ shortest paths in unweighted graphs, except for a class of ``algebraic'' algorithms that rely on fast matrix multiplication as a subroutine and which may be faster when $\sigma, m$ are both large \cite{Seidel95, SZ99}.

\paragraph{Fault-tolerant preservers and additive spanners.}
We next discuss our applications for the efficient constructions of fault-tolerant distance preservers, defined as follows:
\begin{definition}[$S \times T$ $f$-FT Preserver]
A subgraph $H \subseteq G$ is an $S \times T$ $f$-FT preserver if for every $s \in S, t \in T,$ and $F \subseteq E$ of size $|F| \le f$, it holds that
$\dist_{H \setminus F}(s,t)=\dist_{G \setminus F}(s,t).$
\end{definition}
When $T=S$ the object is called a \emph{subset} preserver of $S$, and when $T=V$ (all vertices in the input graph) the object is sometimes called an \emph{FT-BFS structure}, since the $f=0$ case is then solved by a collection of BFS trees. The primary objective for all of these objects is to minimize the size of the preserver, as measured by its number of edges.

For $f=1$, it was shown in \cite{bodwin2017preserving, BCPS20} that one can compute an $S \times S$ $1$-FT preserver with $O(|S|n)$ edges, by properly applying the original version of the restoration lemma by Afek et al.~\cite{ABKCM02}.
Our restorable tiebreaking scheme provides a simple and more general way to convert from $S \times V$ $(f-1)$-FT preservers to $S \times S$ $f$-FT preservers, which also enjoys better construction time, in the centralized and distributed settings. 
For example, for $f=1$ we can compute an $S \times S$ 1-FT preserver simply by taking the union of BFS trees from each source $S$, where each BFS tree is computed using our tiebreaking scheme.
More generally, we get the following bounds:
\begin{theorem}
Given an $n$-vertex graph $G = (V, E)$, a set of source vertices $S \subseteq V$, and a fixed nonnegative integer $f$, there is an $(f+1)$-FT $S \times S$ distance preserver of $G, S$ on
$$O\left( n^{2-1/2^f} |S|^{1/2^f} \right) \mbox{~edges.}$$
\end{theorem}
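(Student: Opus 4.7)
The plan is to induct on $f$, with Theorem~\ref{thm:intromain} as the essential engine. Write $\sigma = |S|$ and let $\pi(\cdot,\cdot)$ denote the canonical shortest path for each ordered pair chosen by that theorem.

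For the base case $f=0$, I would take $H_0 = \bigcup_{s\in S,\, v\in V} \pi(s,v)$. This is the union of $\sigma$ rooted BFS trees (one per source $s \in S$, since the canonical paths out of a fixed source form a tree by consistency of the tiebreaking), so $|H_0| = O(n\sigma) = O(n^{2-1/2^0}\sigma^{1/2^0})$. For any $s,t\in S$ and failing edge $e$, Theorem~\ref{thm:intromain} supplies a midpoint $x$ with $\pi(s,x)\circ\pi(t,x)$ a replacement shortest $s\leadsto t$ path avoiding $e$, and both halves already lie in $H_0$ (the first in the tree of $s$, the second in the tree of $t$).

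For the inductive step I would exploit the conversion advertised in the paragraph above the theorem, starting from an $f$-FT $S\times V$ preserver $H^\star$ of size $O(n^{2-1/2^f}\sigma^{1/2^f})$ (produced by an analogous induction tailored to the $S\times V$ regime, or by existing bounds from the subset-preserver line of work). The claim to verify is that $H^\star$ is already an $(f+1)$-FT $S\times S$ preserver: for any $s,t\in S$ and any $F\subseteq E$ with $|F|\le f+1$, the replacement shortest $s\leadsto t$ path in $G\setminus F$ decomposes as $\pi_{F_s}(s,x)\circ\pi_{F_t}(t,x)$ for some midpoint $x\in V$ and a partition $F = F_s \sqcup F_t$ with $|F_s|,|F_t|\le f$. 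Because $H^\star$ is an $f$-FT $S\times V$ preserver, both canonical fault-avoiding halves are present in $H^\star\setminus F$, and their concatenation is the desired replacement path. The target bound then matches that of $H^\star$, and the recurrence $n^{2-1/2^{f+1}}\sigma^{1/2^{f+1}} = n\cdot\sqrt{n^{2-1/2^f}\sigma^{1/2^f}}$ fits cleanly with the levels of the induction.

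The hardest step, I expect, is this multi-fault decomposition. Theorem~\ref{thm:intromain} only asserts restorability under a single edge failure, and promoting it to an $(f+1)$-fault statement in which each side inherits at most $f$ faults requires iterating the single-fault restoration while keeping the same global tiebreaking and balancing faults across the midpoint. Verifying that one fixed canonical tiebreaking is simultaneously ``restorable'' at every level of the induction, rather than requiring a fresh choice at each level, is the delicate piece of the argument; once it is in place, the size bound and the induction itself are routine bookkeeping.
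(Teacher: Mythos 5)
Your high-level plan matches the paper's: overlay replacement paths under the canonical tiebreaking to get an $f$-FT $S\times V$ preserver, then use restorability to show this same subgraph already survives one additional fault between source pairs. Your base case ($f=0$, union of BFS trees) is correct. But the step you flag as the ``delicate piece'' is exactly where your formulation goes wrong, and it is not just a detail to be filled in.

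You posit that the replacement path in $G\setminus F$ decomposes as $\pi_{F_s}(s,x)\circ\pi_{F_t}(t,x)$ for a \emph{partition} $F=F_s\sqcup F_t$ with $|F_s|,|F_t|\le f$. This is not what the paper's restorability property says, and it does not work as stated: $\pi(s,x\mid F_s)$ is a canonical path in $G\setminus F_s$, which has no reason to avoid the edges in $F_t$, so the concatenation need not live in $G\setminus F$ at all. The paper's Definition of restorability instead asserts a \emph{single} proper subset $F'\subsetneq F$ (the same on both sides) and a midpoint $x$ so that $\pi(s,x\mid F')\circ\pi(t,x\mid F')$ avoids all of $F$. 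Since $|F|\le f+1$ forces $|F'|\le f$, both halves are, by definition, among the $S\times V$ replacement paths overlaid in the $f$-FT $S\times V$ preserver, and correctness is immediate. The paper's Theorem~\ref{thm:restorabletiebreaking} also shows this multi-fault restorability is cheap to obtain from the single-fault case: take $F'$ to be all but one fault of $F$ and apply single-fault restoration inside $G\setminus F'$; this is a one-sentence reduction, not a balancing argument across the midpoint.

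Two smaller points. First, the paper's argument is not really an induction on $f$: the size bound is imported wholesale from Theorem~\ref{thm:expftbfs} (the consistent-stable RPTS bound for $f$-FT $S\times V$ preservers), and the ``one extra fault for free'' conversion is a single application of restorability. Your proposal leaves the construction of the intermediate $H^\star$ to ``an analogous induction tailored to the $S\times V$ regime,'' which is precisely the hard part that Theorem~\ref{thm:expftbfs} supplies. Second, the restorability you need is the formal RPTS version (Definition of Restorability / Theorem~\ref{thm:restorabletiebreaking}), not the informal single-fault statement of Theorem~\ref{thm:intromain}; the former is already multi-fault, so no extra ``iterating'' machinery is needed once you use the right definition.
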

This bound extends the results of \cite{bodwin2017preserving, BCPS20} to larger $f$; for $\ge 2$ faults (that is, $f \ge 1$ in the above theorem), it was not previously known how to compute preserves of this size.
Moreover, using a standard reduction in the literature, we can use these preservers to build improved fault-tolerant \emph{additive spanners}:
\begin{definition} [FT Additive Spanners]
Given a graph $G = (V, E)$ and a set of source vertices $S \subseteq V$, an $f$-FT $+k$ additive spanner is a subgraph $H$ satisfying
$$\dist_{H \setminus F}(s, t) \leq \dist_{G \setminus F}(s, t)+k$$
for all vertices $s, t \in S$ and sets of $|F| \le f$ failing edges.
\end{definition}

\begin{theorem} \label{thm:ftspanintro}
For any $n$-vertex graph $G = (V, E)$ and nonnegative integer $f$, there is an $(f+1)$-FT $+4$ additive spanner on $O_f\left(n^{1 + 2^f / (2^f + 1)}\right)$ edges.
\end{theorem}

This theorem extends a bound of Bil{\' o}, Grandoni, Gual{\' a}, Leucci, Proietti \cite{BGGLP15}, which establishes single-fault $+4$ spanners on $O(n^{3/2})$ edges; this is exactly the construction one gets by plugging in $f=0$ in the above theorem (the same result is also obtained as a corollary of results in \cite{bodwin2017preserving, BCPS20}).
For $\ge 2$ faults ($f \ge 1$ in the above theorem), $+4$ fault-tolerant spanners of the size given in Theorem \ref{thm:ftspanintro}.
However, there are many other notable constructions of fault-tolerant additive spanners with different $+c$ error bounds; see for example \cite{BGGLP15, Parter17, BCP12, bodwin2017preserving, BCPS15}.

\paragraph{Distributed constructions of fault-tolerant preservers.} Distributed constructions of FT preservers, in the $\mathsf{CONGEST}$  model of distributed computing \cite{Peleg:2000}, attracted attention recently \cite{DinitzK:11,GhaffariP16,DR20,ParterDualDist20}. In the context of exact distance preservers, Ghaffari and Parter \cite{GhaffariP16} presented the first distributed constructions of fault tolerant distance preserving structures. For every $n$-vertex $D$-diameter graph $G=(V,E)$ and a source vertex $s \in V$, they gave an $\widetilde{O}(D)$-round randomized $\mathsf{CONGEST}$ algorithm for computing a $1$-FT $\{s\} \times V$ preserver
with $O(n^{3/2})$ edges. Recently, Parter \cite{ParterDualDist20} extended this construction to $1$-FT $S \times V$ preservers with $\widetilde{O}(\sqrt{|S|}n^{3/2})$ edges and using $\widetilde{O}(D+\sqrt{n |S|})$ rounds. \cite{ParterDualDist20} also presented a distributed construction of source-wise preservers against two \emph{edge}-failures, with $O(|S|^{1/8}\cdot n^{15/8})$ edges and using $\widetilde{O}(D+n^{7/8}|S|^{1/8}+|S|^{5/4}n^{3/4})$ rounds. These constructions immediately yield $+2$ additive spanners resilient to two edge failures with subquadratic number of edges, and sublinear round complexity. To this date, we are still lacking efficient distributed constructions\footnote{By efficient, we mean with subquadratic number of edges and sublinear round complexity.} of $f$-FT preservers (or additive spanners) for $f\geq 3$. In addition, no efficient constructions are known for FT spanners with additive stretch larger than two (which are sparser in terms of number of edges w.r.t the current $+2$ FT-additive spanners). Finally, there are no efficient constructions of subsetwise FT-preservers, e.g., the only distributed construction for $1$-FT $S \times S$  preserver employs the sourcewise construction of $1$-FT $S \times V$ preservers, ending with a subgraph of $O(\sqrt{|S|}n^{3/2})$ edges which is quite far from the state-of-the-art (centralized) bound of $O(|S|n)$ edges. 
In this work, we make a progress along all these directions. Combining the restorable tiebreaking scheme with the work of  \cite{ParterDualDist20} allows us to provide efficient constructions of $f$-FT $S \times S$ preservers for $f \in \{1,2,3\}$ whose size bounds match the state-of-the-art bounds of the centralized constructions.
As a result, we also get the first distributed constructions of $+4$ additive spanners resilient to $f \in \{1,2,3\}$ edge faults. 

\begin{theorem}[Distributed Constructions of Subsetwise FT-Preservers]\label{thm:dist-constructions}
For every $D$-diameter $n$-vertex graph $G=(V,E)$, there exist randomized distributed $\mathsf{CONGEST}$ algorithms for computing:
\begin{itemize}[noitemsep]
\item $1$-FT $S\times S$ preservers with $O(|S|n)$ edges and $\widetilde{O}(D+|S|)$ rounds.
\item $2$-FT $S \times S$ preservers with $O(\sqrt{S}n^{3/2})$ edges and $\widetilde{O}(D+\sqrt{|S|n})$ rounds.
\item $3$-FT $S \times S$ preservers with $O(|S|^{1/8}\cdot n^{15/8})$ edges and $\widetilde{O}(D+n^{7/8}|S|^{1/8}+|S|^{5/4}n^{3/4})$ rounds.
\end{itemize}
\end{theorem}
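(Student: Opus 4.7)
The plan is to build each $S \times S$ $f$-FT preserver as a union of single-source $(f-1)$-FT preservers whose shortest-path selections respect the restorable tiebreaking of Theorem~\ref{thm:intromain}. First I would broadcast a globally consistent edge weighting $w$ realizing that tiebreaking (following Section~\ref{sec:tiebreaking}), at a cost of $\widetilde{O}(D)$ rounds over a BFS tree rooted at any fixed vertex. For $f=1$, I would pipeline $|S|$ independent BFS trees rooted at the sources, each computed in the $w$-based order, yielding $O(|S|n)$ edges in $\widetilde{O}(D + |S|)$ rounds. For $f=2$ and $f=3$, I would invoke the sourcewise $1$-FT and $2$-FT preserver algorithms of~\cite{ParterDualDist20}, modified so that all internal shortest-path selections break ties by $w$; the resulting $H_s$ will contain the tiebroken shortest path $\pi_{F'}(s, v)$ for every $v \in V$ and every $F' \subseteq E$ with $|F'| \leq f-1$, and the edge counts and round complexities are inherited directly from Parter's analysis. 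The final preserver is $H := \bigcup_{s \in S} H_s$.

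For correctness, fix $s, t \in S$ and a fault set $F$ with $|F| \leq f$. Choose any $e \in F$, set $F' := F \setminus \{e\}$, and apply Theorem~\ref{thm:intromain} inside the graph $G \setminus F'$ with failing edge $e$. This yields a vertex $x$ such that the concatenation of tiebroken paths $\pi_{F'}(s, x) \cdot \pi_{F'}(t, x)$ is a shortest $s \leadsto t$ path in $G \setminus F$. Since $|F'| \leq f-1$, both halves lie in $H_s$ and $H_t$ respectively, hence in $H$; therefore $\dist_{H \setminus F}(s, t) = \dist_{G \setminus F}(s, t)$.

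The main obstacle I foresee is compatibility with the internals of~\cite{ParterDualDist20}: Parter's algorithms implicitly choose their own shortest-path tiebreaking, and I must verify that replacing it with our $w$-based rule neither breaks the correctness invariants nor inflates the round complexities. A companion issue is that the reduction demands a single edge weighting $w$ inducing a restorable tiebreaking simultaneously on every subgraph $G \setminus F'$ with $|F'| \leq f-1$, not merely on $G$ itself; confirming this ``subgraph-uniform'' property of the construction from Theorem~\ref{thm:intromain} is what guarantees that each $H_s$ contains exactly the pieces needed by the decomposition above.
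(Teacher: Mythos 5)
Your proposal follows essentially the same route as the paper: compute a restorable antisymmetric reweighting, run the $S\times V$ $(f-1)$-FT preserver constructions of Parter (and, for $f=1$, plain tiebroken BFS trees scheduled via random delays) with ties broken by that weighting, and then invoke restorability to conclude that the overlayed preserver also captures $S\times S$ distances under one more fault. One small caution: you should not literally \emph{broadcast} the edge weighting $w$, since there are $\Theta(m)$ weights and routing them over a BFS tree would cost far more than $\widetilde O(D)$ rounds; the paper instead has each vertex sample the weights of its own incident edges and share them with its neighbors in a single round (a shared seed of $O(\log^2 n)$ bits is broadcast only to coordinate the random-delay schedule). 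Also, the correctness step should cite restorability (Theorem~\ref{thm:restorabletiebreaking}) rather than Theorem~\ref{thm:intromain}, since you are applying it inside $G\setminus F'$ for a nonempty fault set $F'$; the paper's proof of restorability explicitly covers this case by reducing from general $F$ to a single remaining fault in $G\setminus F'$, which is exactly the ``subgraph-uniform'' property you flag. Your remaining concern about compatibility with the internals of \cite{ParterDualDist20} is legitimate and is the same point the paper dispatches briefly by noting the BFS tokens can be augmented with path lengths under $w$.
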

Using the $f$-FT $S\times S$ preservers for $f\in \{1,2,3\}$ for a subset $S$ of size $\sigma \in \{\sqrt{n}, n^{1/3}, n^{1/9}\}$ respectively, we get the first distributed constructions of $f$-FT $+4$ additive spanners. 
\begin{corollary}[Distributed Constructions of FT-Additive Spanners]\label{cor:dist-add-constructions}
For every $D$-diameter $n$-vertex graph $G=(V,E)$, there exist randomized distributed algorithms for computing:
\begin{itemize}[noitemsep]
\item $1$-FT $+4$ additive spanners with $\widetilde{O}(n^{3/2})$ edges and $\widetilde{O}(D+\sqrt{n})$ rounds.
\item $2$-FT $+4$ additive spanners with $\widetilde{O}(n^{5/3})$ edges and $\widetilde{O}(D+n^{5/6})$ rounds.
\item $3$-FT $+4$ additive spanners preservers with $\widetilde{O}(n^{17/9})$ edges and $\widetilde{O}(D+n^{8/9})$ rounds.
\end{itemize}
\end{corollary}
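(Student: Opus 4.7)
The plan is to reduce to the fault-tolerant $S \times S$ preserver problem of Theorem~\ref{thm:dist-constructions} via the standard hitting-set / pivot reduction already used in the centralized setting for the main-result $+4$ spanner.

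First, for each $f \in \{1, 2, 3\}$ I would fix a degree threshold $d_f$ and pivot sample size $\sigma_f$ by balancing the $O(n d_f)$ budget for low-degree edges against the preserver size from Theorem~\ref{thm:dist-constructions}. Solving these balance equations yields $(d_1, \sigma_1) = (\sqrt{n}, \sqrt{n})$, $(d_2, \sigma_2) = (n^{2/3}, n^{1/3})$, and $(d_3, \sigma_3) = (n^{8/9}, n^{1/9})$, giving the claimed edge counts $\widetilde{O}(n^{3/2})$, $\widetilde{O}(n^{5/3})$, and $\widetilde{O}(n^{17/9})$. The spanner is assembled as $H = L \cup P$, where $L$ contains every edge incident to a vertex of degree at most $d_f$, and $P$ is the $f$-FT $S \times S$ preserver supplied by Theorem~\ref{thm:dist-constructions} on a pivot set $S$ of size $\widetilde{O}(\sigma_f)$, sampled by independent local randomness with inclusion probability $\widetilde{\Theta}(\sigma_f / n)$. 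Distributedly, $L$ is computed in $O(1)$ rounds (each vertex inspects its own degree and marks its edges), $S$ is disseminated in $\widetilde{O}(D)$ rounds, and $P$ is built in the round complexity of Theorem~\ref{thm:dist-constructions}; substituting the chosen $(d_f, \sigma_f)$ recovers the stated round counts.

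The correctness would follow from the classical pivot argument: for any $u, v$ and fault set $F$ with $|F| \le f$, decompose a shortest $u \leadsto v$ path in $G \setminus F$ at its first and last high-degree vertices $u', v'$, route $u \leadsto u'$ and $v' \leadsto v$ inside $L$ (since their interior vertices are low-degree), and route $u' \leadsto v'$ by hopping to pivots $s, t \in S$ near $u', v'$ and invoking the preserver guarantee $\dist_{P \setminus F}(s, t) = \dist_{G \setminus F}(s, t)$, paying at most $+2$ per detour hop for a total additive overhead of $+4$.

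The main technical obstacle is verifying that the pivot-hop argument remains valid in the presence of $f$ simultaneous edge failures: each high-degree vertex on the shortest path must still have a surviving $S$-neighbor after $F$ is deleted, so that the detour hops are available. Handling this will require inflating the sampling probability $\sigma_f/n$ by an $f$-dependent polylogarithmic factor to enforce a ``robust hitting set'' property that holds with high probability against every choice of $|F| \le f$. The remaining work---degree-threshold balancing, edge counting, and substituting into the round bounds of Theorem~\ref{thm:dist-constructions}---is mechanical once the preserver is available as a black box.
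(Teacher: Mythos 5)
Your overall route is the same as the paper's: sample a pivot set $S$ of size roughly $\sqrt{n}$, $n^{1/3}$, $n^{1/9}$ for $f=1,2,3$, build the distributed $f$-FT $S\times S$ preserver of Theorem~\ref{thm:dist-constructions} in its stated round complexity, add cheap local edges, and argue a $+4$ stretch by hopping from the first/last ``dense'' vertex of a replacement path to a nearby pivot. The paper does exactly this, except that it does not re-derive the reduction inside the corollary: it simply invokes Lemma~\ref{lem:prestospan} (the preserver-to-spanner reduction, proved earlier) together with Theorem~\ref{thm:dist-constructions}, with $\sigma=\Theta(\sqrt{n}\log n)$, $n^{1/3}$, $n^{1/9}$.

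There is, however, a concrete gap in your construction as stated. You set $H=L\cup P$ with $L$ the edges incident to low-degree vertices and $P$ the $f$-FT $S\times S$ preserver, but the detour edges $(u',s)$ and $(v',t)$ that your stretch argument uses need not belong to either set: $u'$ is by definition high-degree, its surviving pivot neighbor $s$ may also be high-degree, and the preserver $P$ only guarantees edges needed for $S\times S$ distances, not adjacencies between arbitrary vertices and pivots. So the hop edges can be absent from $H$ and the $+4$ bound does not follow. The fix is exactly the clustering step of Lemma~\ref{lem:prestospan}: every vertex with at least $f+1$ sampled neighbors explicitly adds $f+1$ edges to distinct pivots (an extra $O(nf)$ edges, negligible), and every other vertex adds all of its edges (the $O(n^2f/\sigma)$ term, matching your $O(nd_f)$). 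Note that this dichotomy---by number of sampled neighbors rather than by a degree threshold---also disposes of your ``robust hitting set'' concern more cleanly: since a clustered vertex keeps $f+1$ distinct pivot edges, at most $f$ edge faults cannot delete all of them, so the availability of the hop is deterministic and needs neither an inflated sampling rate nor any high-probability statement quantified over all fault sets $|F|\le f$; only the edge count is probabilistic. With that repair your parameter balancing, edge counts, and substitution into the round bounds of Theorem~\ref{thm:dist-constructions} go through as you describe.
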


One can also remove the $\log$ factors in these spanner sizes in exchange for an edge bound that holds in expectation, instead of with high probability.

\paragraph{Fault-Tolerant Exact Distance Labeling}

A \emph{distance labeling scheme} is a way to assign short bitstring labels to each vertex of a graph $G$ such that $\dist(s, t)$ can be recovered by inspecting only the labels associated with $s$ and $t$ (and no other information about $G$) \cite{GPPR04,FKMS05,SK85}. In an \emph{$f$-FT} distance labeling scheme, the labels are assigned to both the vertices and the edges of the graph, such that for any set of $|F| \le f$ failing edges, we can even recover $\dist_{G \setminus F}(s, t)$ by inspecting only the labels of $s, t$ and the edge set $F$.
These are sometimes called \emph{forbidden-set labels} \cite{CT07,CGKT07}, and they have been extensively studied in specific graph families, especially due to applications in routing \cite{ACG12, ACGP16}.
The prior work mainly focused on connectivity labels and \emph{approximate} distance labels. In those works, the labels were given also to the edges, and one can inspect the labels of failing edges as well.

In our setting we consider \emph{exact} distance labels. Interestingly, our approach will not need to use edge labels; that is, it recovers $\dist_{G \setminus F}(s, t)$ only from the labels of $s, t$ and a description of the edge set $F$.
Since one can always provide the entire graph description as part of the label, our main objective is in providing FT exact distance labels of \emph{subquadratic} length, 
To the best of our knowledge, the only prior labeling scheme for recovering exact distances under faults was given by Bil{\' o} et al.~\cite{BCGLPP18}. They showed that given a source vertex $s$, one can recover distances in $\{s\} \times V$ under one failing edge using labels of size $O(n^{1/2})$. For the all-pairs setting, this would extend to label sizes of $O(n^{3/2})$ bits.

We prove:
\begin{theorem}[Subquadratic FT labels for Exact Distances]
For any fixed nonnegative integer $f \geq 0$, and $n$-vertex unweighted undirected graph, there is an $(f+1)$-FT distance labeling scheme that assigns each vertex a label of
$$O\left(n^{2 - 1/2^f} \log n\right) \text{ bits}.$$
\end{theorem}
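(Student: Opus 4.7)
The plan is to let each vertex $s$'s label be a $\{s\} \times V$ $f$-FT distance preserver $L(s) \subseteq G$ of size $\Oish(n^{2 - 1/2^f})$, matching known sourcewise FT-BFS bounds (e.g., Parter-Peleg for $f=1$, and their generalizations for larger $f$). Given labels $L(s), L(t)$ and a query $(s,t,F)$ with $|F| \leq f+1$, the query algorithm simply runs BFS in $(L(s) \cup L(t)) \setminus F$ and returns the computed distance. The construction of each $L(s)$ proceeds by the same recursive scheme used to prove the $S \times S$ preserver bound earlier in the paper, specialized to the sourcewise setting and invoking the restorable tiebreaking of Theorem \ref{thm:intromain}.

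The core correctness statement is a structural lemma: for any two vertices $s,t$ and failure set $F$ with $|F| \le f+1$, the union $L(s) \cup L(t)$ forms a $\{s,t\} \times \{s,t\}$ $(f+1)$-FT preserver, i.e.\ $\dist_{(L(s) \cup L(t)) \setminus F}(s,t) = \dist_{G \setminus F}(s,t)$. The idea is to invoke Theorem \ref{thm:intromain} in the graph $G \setminus (F \setminus \{e\})$ with a chosen fault $e \in F$: this yields a midpoint $x$ such that the replacement $s \leadsto t$ path in $G \setminus F$ decomposes as $\pi(s,x) \circ \pi(t,x)$, where each half is a selected shortest path in a graph with only $f$ faults. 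By the $f$-FT property of $L(s)$ (resp.\ $L(t)$), the graph $L(s)$ preserves $\dist_{G \setminus (F \setminus \{e\})}(s,x)$, and similarly for $L(t)$; combining these yields a valid concatenation inside $(L(s) \cup L(t)) \setminus F$ of the required length.

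For the size bound I would induct on $f$. The base case is $f = 0$: $L(s)$ is the consistent shortest path tree from $s$ selected by the restorable tiebreaking scheme, which has $O(n) = \Oish(n^{2 - 1/2^0})$ edges. The inductive step parallels the $S\times S$ preserver construction, augmenting the $f-1$-case label with selected shortest paths to a set of $\Oish(n^{1/2^f})$ pivots chosen so that every additional replacement path is captured; the corresponding recursion $T(f) = \Oish(n^{1/2^f}) \cdot T(f-1)$ resolves to $\Oish(n^{2 - 1/2^f})$.

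The main obstacle will be the structural lemma, specifically ensuring that the preserved sub-paths in $L(s)$ and $L(t)$ can be glued into a single replacement path avoiding \emph{all} of $F$, not merely the single fault $e$ picked by the restoration step. The preservers only guarantee distance preservation, so one must exploit the \emph{restorable} property of the tiebreaking (rather than distance-equivalence alone) to guarantee that the specific selected shortest paths used in the restoration decomposition actually lie in the label. Overcoming this requires choosing the labels not as arbitrary $f$-FT preservers, but as ones built by the canonical tiebreaking-driven construction, so that the recursive restoration identities hold at every level.
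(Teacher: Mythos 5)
Your proposal is correct and takes essentially the same approach as the paper: the label of $s$ is the $\{s\}\times V$ $f$-FT preserver formed by overlaying the replacement paths selected by the restorable (consistent, stable) RPTS of Theorem \ref{thm:restorabletiebreaking}, with size $\Oish(n^{2-1/2^f})$ from Theorem \ref{thm:expftbfs}, and a query unions the two labels, deletes $F$, and computes the distance, correctness following from restorability with $|F'|\le f$. The obstacle you flag --- that generic distance preservation does not guarantee the selected paths $\pi(s,x\mid F')$, $\pi(t,x\mid F')$ actually lie in the labels --- is resolved exactly as you propose, by taking each label to be the overlay of the $\pi$-selected replacement paths themselves rather than an arbitrary $f$-FT preserver, which is precisely the paper's construction.
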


For $f=0$ our vertex labels have size $\Oish(n)$, improving over $\Oish(n^{3/2})$ from \cite{BCGLPP18}.
Our size is near-optimal, for $f=0$, in the sense that $\Omega(n)$ label sizes are required even for non-faulty exact distance labeling. This is from a simple information-theoretic lower bound: one can recover the graph from the labeling, and there are $2^{\Theta(n^2)}$ $n$-vertex graphs, so $\Omega(n^2)$ bits are needed in total.

Finally, we remark that FT labels for exact distances are also closely related to distance sensitivity oracles: these are global and centralized data structures that reports $s$-$t$ distances in $G \setminus F$ efficiently. For any $f=O(\log n/\log\log n)$, Weimann and Yuster \cite{weimann2010replacement,weimann2013replacement} provided a construction of distance sensitivity oracles using subcubic space and subquadratic query time. The state-of-the-art bounds for this setting are given by van den Brand and Saranurak \cite{van2019sensitive}. It is unclear, however, how to balance the information of these global succinct data-structures among the $n$ vertices, in the form of distributed labels.

\subsection{Other Graph Settings}

The results of this paper do not extend to directed and/or weighted graphs; we use both undirectedness and unweightedness in the proof of our main theorem.
Indeed, it was noted in \cite{ABKCM02, bodwin2017preserving} that the restoration lemma itself is not still generally true for graphs that are weighted and/or directed, so there is not much hope for a direct extension of Theorem \ref{thm:intromain} to these settings.
However, we will briefly discuss two extensions of this theory to other graph settings that appear in prior work.
First, the original work of Afek et al.~\cite{ABKCM02} included the following version of the restoration lemma for weighted graphs:
\begin{theorem} [Weighted Restoration Lemma \cite{ABKCM02}] \label{thm:wtdrestorationintro}
For any undirected graph $G = (V, E, w)$ with positive edge weights, vertices $s, t \in V$, and failing edge $e \in E$, there exists an edge $(u, v)$ such that for \textbf{any} shortest paths $\pi(s, u), \pi(v, t)$, the path $\pi(s, u) \circ (u, v) \circ \pi(v, t)$ is a replacement $s \leadsto t$ shortest path avoiding $e$.
\end{theorem}

The weighted restoration lemma gives a weaker structural property than the original restoration lemma, since it includes a middle edge between the two concatenated shortest paths.
But, it is not tiebreaking-sensitive and it extends to weighted graphs, which make it useful in some settings (in particular to generate small base sets without reference to a particular tiebreaking scheme, as discussed above).
The second extension to mention is that the weighted and unweighted restoration lemmas both extend to DAGs, and so many of their applications extend to DAGs as well \cite{ABKCM02, bodwin2017preserving}.
It seems very plausible that our main result admits some kind of extension to unweighted DAGs, but we leave the appropriate formulation and proof as a direction for future work.

\section{Replacement Path Tiebreaking Schemes \label{sec:tiebreaking}}

In this section we formally introduce the framework of \emph{tiebreaking schemes} in network design, and we extend it into the setting where graph edges can fail.
The following objects are studied in prior work on non-faulty graphs:
\begin{definition} [Shortest Path Tiebreaking Schemes]
In a graph $G$, a \emph{shortest path tiebreaking scheme} $\pi$ is a function from vertex pairs $s, t$ to one particular shortest $s \leadsto t$ path $\pi(s, t)$ in $G$ (or $\pi(s, t) := \emptyset$ if no $s \leadsto t$ path exists).
\end{definition}

It is often useful to enforce coordination between the choices of shortest paths.
Two basic kinds of coordination are:
\begin{definition} [Symmetry]
A tiebreaking scheme $\pi$ is \emph{symmetric} if for all vertex pairs $s, t$, we have $\pi(s, t) = \pi(t, s)$ (when $\pi(s, t), \pi(t, s)$ are viewed as undirected paths).
\end{definition}

\begin{definition} [Consistency]
A tiebreaking scheme $\pi$ is \emph{consistent} if, for all vertices $s, t, u, v$, if $u$ precedes $v$ in $\pi(s, t)$, then $\pi(u, v)$ is a contiguous subpath of $\pi(s, t)$.
\end{definition}

Consistency is important for many reasons; it is worth explicitly pointing out the following two:
\begin{itemize}
\item As is well known, in any graph $G = (V, E)$ one can find a subtree that preserves all $\{s\} \times V$ distances.
Consistency gives a natural converse to this statement: if one selects shortest paths using a consistent tiebreaking scheme and then overlays the $\{s\} \times V$ shortest paths, one gets a tree.

\item It is standard to encode the shortest paths of a graph $G$ in a \emph{routing table} -- that is, a matrix indexed by the vertices of $G$ whose $(i, j)^{th}$ entry holds the vertex ID of the next hop on an $i \leadsto j$ shortest path.
For example, the standard implementation of the Floyd-Warshall shortest path algorithm outputs shortest paths via a routing table of this form.
Many routing tables in practice work similarly; for example, routing tables for the Internet typically encode only the next hop on the way to the destination.
Once again, consistency gives a natural converse: if one selects shortest paths in a graph using consistent tiebreaking, then it is possible to encode these paths in a routing table.
\end{itemize}

Since our goal is to study tiebreaking under edge faults, we introduce the following extended definition:
\begin{definition} [$f$-Replacement Path Tiebreaking Schemes]
In a graph $G = (V, E)$, an \emph{$f$-replacement path tiebreaking scheme ($f$-RPTS)} is a function of the form $\pi(s, t \mid F)$, where $s, t \in V$ and $F \subseteq E, |F| \le f$.
The requirement is that, for any fixed set of failing edges $F$ of size $|F| \le f$, the two-parameter function $\pi(\cdot, \cdot \mid F)$ is a shortest path tiebreaking scheme in the graph $G \setminus F$.
\end{definition}

We will say that an RPTS $\pi$ is symmetric or consistent if, for any given set $F$ of size $|F| \le f$, the tiebreaking scheme $\pi(\cdot, \cdot \mid F)$ is symmetric or consistent over the graph $G \setminus F$.
We also introduce the following natural property, which says that selected shortest paths do not change unless this is forced by a new fault:
\begin{definition} [Stability]
An $f$-RPTS $\pi$ is \emph{stable} if, for all $s, t, F$ of size $|F| \le f-1$ and edge $f \notin \pi(s, t \mid F)$, we have $\pi(s, t \mid F) = \pi(s, t \mid F \cup \{f\})$.
\end{definition}

Our main result is expressed formally using an additional coordination property of RPTSes that we call \emph{restorability}.
We will discuss that in the following section.

\section{Restorable Tiebreaking \label{sec:restorable}}

The main result in this paper concerns the following new coordination property:
\begin{definition} [$f$-Restorable Tiebreaking]
An RPTS $\pi$ is \emph{$f$-restorable} if, for all vertices $s, t$ and nonempty edge fault sets $F$ of size $|F| \le f$, there exists a vertex $x$ and a proper fault subset $F' \subsetneq F$ such that the concatenation of the paths $\pi(s, x \mid F'), \pi(t, x \mid F')$ forms an $s \leadsto t$ replacement path avoiding $F$.
(Note: it is not required that this concatenated path is specifically equal to $\pi(s, t \mid F)$, just that it is one of the possible replacement paths.)
\end{definition}

The motivation for this definition comes from restoration lemma by Afek et al~\cite{ABKCM02}, discussed above.
It is not obvious at this point that any or all graphs should admit a restorable RPTS.
That is the subject of our main result.

\subsection{Antisymmetric Tiebreaking Weight Functions and Restorability}

We will analyze a class of RPTSes generated by the following method.
Let $G = (V, E)$ be the undirected unweighted input graph.
Convert $G$ to a directed graph by replacing each undirected edge $(u, v) \in E$ with both directed edges $\{(u, v), (v, u)\}$.
For this symmetric directed graph, we define:
\begin{definition} [Antisymmetric Tiebreaking Weight Function] \label{def:atw}
A function $r : E \to \rr$ is an antisymmetric $f$-fault tiebreaking weight (ATW) function for $G$ if it satisfies:
\begin{itemize}
\item (Antisymmetric) $r(u, v) = -r(v, u)$ for all $(u, v) \in E$,

\item ($f$-Fault Tiebreaking) Let $G^*$ be the directed weighted graph obtained by setting the weight of each edge in $G$ to be $w(u, v) := 1 + r(u, v)$.
The requirement is that for any edge subset $F$ of size $|F| \le f$, the graph $G^* \setminus F$ has a unique shortest path for each node pair, and moreover these unique shortest paths are each a shortest path in the graph $G \setminus F$.
\end{itemize}
\end{definition}

Any antisymmetric $f$-fault tiebreaking weight function $r$ naturally generates an $f$-RPTS for $G$, in which $\pi(s, t \mid F)$ is the unique shortest $s \leadsto t$ path in the graph $G^* \setminus F$.
We next prove that any $f$-RPTS generated in this way grants $f$-restorability.
Afterwards, we will discuss issues related to existence and computation of antisymmetric $f$-fault tiebreaking weight functions.

\begin{theorem} \label{thm:restorabletiebreaking}
Any $f$-RPTS $\pi$ generated by an antisymmetric $f$-fault tiebreaking weight function $r$ is simultaneously stable, consistent, and $f$-restorable.
\end{theorem}
\begin{proof}
Consistency and stability follow immediately from the fact that the paths selected by $\pi$ under fault set $F$ are unique shortest paths in the graph $G^* \setminus F$.
The rest of this proof establishes that $\pi$ is $f$-restorable.
We notice that it suffices to consider only the special case $f=1$, for the following reason.
Suppose we are analyzing $f$-restorability, and we consider an arbitrary set $F$ of $|F| \le f$ failing edges.
We can select an arbitrary subset $F' \subseteq F$ of all but one failing edge.
We can then view $G \setminus F'$ as the input graph, and we can view $\pi$ as a $1$-RPTS over this input graph.
If we can prove that $\pi$ is $1$-restorable on $G \setminus F'$, this implies in turn that $\pi$ is $f$-restorable over $G$.

So, the following proof assumes $f = 1$.
Let $s, t$ be vertices and let $(u, v)$ be the one failing edge.
We may assume without loss of generality that $(u, v) \in \pi(s, t)$, with that orientation, since otherwise by stability we have $\pi(s, t) = \pi(s, t \mid (u, v))$ and so claim is immediate by choice of (say) $x=t$.
Let $x$ be the last vertex along $\pi(s, t \mid (u, v))$ such that $\pi(s, x)$ avoids $(u, v)$, and let $y$ be the vertex immediately after $x$ along $\pi(s, t \mid (u, v))$.
Hence $(u, v) \in \pi(s, y)$.
These definitions are all recapped in the following diagram.

\begin{center}
\begin{tikzpicture}
\draw [fill=black] (0, 0) circle [radius=0.15];
\draw [fill=black] (7, 0) circle [radius=0.15];
\node at (0, -0.5) {$s$};
\node at (7, -0.5) {$t$};

\draw [fill=black] (3, 0) circle [radius=0.15];
\draw [fill=black] (4, 0) circle [radius=0.15];
\node at (3.5, 0) {\Huge $\mathbf \times$};
\node at (3, -0.5) {$u$};
\node at (4, -0.5) {$v$};

\draw [fill=black] (3, 1) circle [radius=0.15];
\draw [fill=black] (4, 1) circle [radius=0.15];
\node at (3, 1.5) {$x$};
\node at (4, 1.5) {$y$};
\draw [dashed] plot coordinates {(0, 0) (3, 1) (4, 1) (7, 0)};
\node [rotate=-21] at (5.8, 0.8) {$\pi(s, t \mid f)$};
\draw (0, 0) -- (3, 1);

\draw plot [smooth] coordinates {(0, 0) (3, 0) (4, 0) (4, 1)};
\node at (1.5, -0.4) {$\pi(s, y)$};
\node [rotate=21] at (1.2, 0.8) {$\pi(s, x)$};

\end{tikzpicture}
\end{center}

Our goal is now to argue that $\pi(t, x)$ does not use the edge $(u, v)$, and thus $\pi(s, x) \cup \pi(t, x)$ forms a replacement path avoiding $(u, v)$.
Since we have assumed that $\dist_G(u, t) > \dist_G(v, t)$, if $(v, u) \in \pi(t, x)$ then it appears with that particular orientation, $v$ preceding $u$.

In the following we will write $\dist^*(\cdot, \cdot)$ for the distance function in the directed reweighted graph $G^*$ (so $\dist^*$ is not necessarily integral, and it is asymmetric in its two parameters).
Recall that $\pi(s, y)$ includes the edge $(u, v)$, and hence the $u \leadsto y$ path through $v$ is shorter than the alternate $u \leadsto y$ path through $x$.
We thus have the inequality:
$$\dist^*(u, v) + \dist^*(v, y) < \dist^*(u, x) + \dist^*(x, y).$$
Since $(u, v), (x, y)$ are single edges we can write
$$(1 + r(u, v)) + \dist^*(v, y) < \dist^*(u, x) + (1 + r(x, y)),$$
Rearranging and using antisymmetry of $r$, we get
$$(1 + r(y, x)) + \dist^*(v, y) < \dist^*(u, x) + (1 + r(v, u)),$$
and so
$$\dist^*(v, y) + \dist^*(y, x) < \dist^*(v, u) + \dist^*(u, x).$$
This inequality says that the $v \leadsto x$ path that passes through $y$ is shorter in the reweighted graph than the one that passes through $u$.
So $(v, u) \notin \pi(v, x)$.
By consistency and the previously-mentioned fact that $\dist(v, t) < \dist(u, t)$, this also implies that $(v, u) \notin \pi(t, x)$, as desired.
\end{proof}

To complete our main result, we still need to prove existence of antisymmetric tiebreaking weight functions.
\begin{theorem} \label{thm:rwtexists}
Every undirected unweighted graph $G$ admits an antisymmetric $f$-fault tiebreaking weight function $r$ (for any $f$).
\end{theorem}
\begin{proof}
The simplest way to generate $r$ is randomly.
Let $n$ be the number of nodes in $G$, and let $\eps < 1/(2n)$.
For each edge $(u, v) \in G$, set $r(u, v)$ to a uniform random real number in the interval $[-\eps, \eps]$, and set $r(v, u) := -r(u, v)$.
Antisymmetry of $r$ is immediate.
We then need to argue that $r$ acts as tiebreaking edge weights (with probability $1$).

Fix an edge subset $F$ and nodes $s, t$.
Consider two tied-for-shortest $s \leadsto t$ paths $q_1, q_2$ in the graph $G \setminus F$.
The probability that the lengths of $q_1, q_2$ remain tied in $G^*$ is $0$.
To see this, consider an edge $e \in q_1 \setminus q_2$.
If we fix the value of $r$ on all other edges in $(q_1 \cup q_2) \setminus e$, then there is at most one value of $r(e)$ that would cause the lengths of $q_1, q_2$ to tie.
The probability we set $r(e)$ to exactly this value in the interval $[-\eps, \eps]$ is $0$.
Thus, with probability $1$, among the set of shortest $s \leadsto t$ paths in $G \setminus F$ there will be a unique one of minimum length in $G^*$.

Finally, we need to show that no non-shortest path in $G$ becomes a shortest path in $G^*$.
Let $q_1$ again be a shortest $s \leadsto t$ path in $G \setminus F$, and let $q_2$ be a non-shortest simple $s \leadsto t$ path in $G \setminus F$.
Since $G \setminus F$ is unweighted, the length of $q_2$ is at least $1$ more than the length of $q_1$.
Since $\eps < 1/(2n)$, and since $q_1, q_2$ each contain at most $n-1$ edges, the length of $q_1$ increases over the reweighting in $G^*$ by $<1/2$, and the length of $q_2$ decreases over the reweighting in $G^*$ by $<1/2$.
Thus $q_1$ remains strictly shorter than $q_2$ in $G^*$.
\end{proof}

\subsection{Bit Complexity and Determinism \label{sec:detbit}}

Before continuing, we address two possible shortcomings in the proof of Theorem \ref{thm:rwtexists}.
The first concern is that this proof operates in the real-RAM model: we allow ourselves to perturb edge weights by arbitrary real numbers in the interval $[-\eps, \eps]$.
Practical implementations might need to care about the \emph{bit complexity} of $r$; that is, they might pay a time/space penalty if there is a significant space overhead to representing the edge weights in the graph $G^*$.

Fortunately, there is a bit-efficient solution to this problem, based on an application of the \emph{isolation lemma} of Mulmuley, Vazirani, and Vazirani.
We will state a slightly special case of the isolation lemma here (for paths in graphs):
\begin{theorem} [Isolation Lemma \cite{MVV87}]
Let $G = (V, E)$ be a graph and let $\Pi$ be a set of paths in $G$.
Suppose we choose an integer weight for each edge in $G$ uniformly at random from the range $\{1, \dots, W\}$.
Then, with probability at least $1 - |E|/W$, there is a unique path $\pi \in \Pi$ that has minimum length among the paths in $\Pi$.
\end{theorem}

The surprise in the isolation lemma is that there is no dependence on the number of paths $|\Pi|$: there can even be exponentially many paths in $\Pi$, and yet we will still have a shortest one with good probability.
This makes it very helpful for tiebreaking applications, like the following.

\begin{corollary}
In any $n$-node undirected unweighted graph $G = (V, E)$ and integer $f \ge 1$, there is a randomized polynmoial time that returns an antisymmetric $f$-fault tiebreaking weight function $r$ for which each value $r(u, v)$ can be represented in $O(f \log n)$ bits.
\end{corollary}
\begin{proof}
Set $W := n^{f+4+c}$, where $c$ is a positive integer constant that we will choose later.
Set the values of the weight function $r(u, v)$ by selecting a uniform-random number from among the $2W+1$ numbers in set
$$\left\{\frac{i}{W} \cdot \frac{1}{2n} \ \mid \ i \in \{-W, -W+1 \dots, W-1, W\} \right\},$$
and set $r(v, u) := -r(u, v)$ to enforce antisymmetry.
Note that we need to encode one of $n^{f+4+c}$ values for each edge weight, which requires $\log (n^{f+4+c}) = O(f \log n)$ bits per edge.
In the following, let $G^*$ be the directed reweighted version of $G$ according to weight function $r$.

Now fix nodes $s, t$ and set of edges $F$ with $|F| \le f$, let $\Pi$ be the set of shortest $s \leadsto t$ paths in the graph $G \setminus F$.
Exactly as in Theorem \ref{thm:rwtexists}, since path lengths change by $<1/(2n)$ over the reweighting, no non-shortest path in $G \setminus F$ can become a shortest path in the graph $G^* \setminus F$ reweighted by $r$.
Meanwhile, by the isolation lemma, with probability at least 
$$1 - \frac{|E|}{n^{f+4+c}} > 1 - \frac{1}{n^{f+2+c}},$$
there is a unique path in $\Pi$ whose weight decreases the most over the reweighting to $G^*$, which is thus the unique shortest $s \leadsto t$ path in $G^* \setminus F$.
By a union bound over the $\le n^{f+2}$ possible choices of $s, t, F$, there is $\ge 1 - \frac{1}{n^c}$ probability that for \emph{every} possible choice of $s, t, F$, there is a unique shortest $s \leadsto t$ path in the graph $G^* \setminus F$.
Thus, with high probability (controlled by the choice of $c$), $r$ is $f$-tiebreaking.
\end{proof}

The second possible shortcoming of this approach is that it is \emph{randomized}; it is natural to ask whether one can achieve an antisymmetric weight function \emph{deterministically}.
There one easy way to do so, using a slight tweak on a folklore method:
\begin{theorem}
There is a deterministic polynomial time algorithm that, given an $n$-node graph $G = (V, E)$, computes antisymmetric $f$-tiebreaking edge weights (for any $f$) using $O(|E|)$ bits per edge.
\end{theorem}
\begin{proof}
Assume without loss of generality that $G$ is connected (otherwise one can compute tiebreaking edge weights over each connected component individually).
We define our weight function $r$ as follows.
Arbitrarily assign the edges ID numbers $i \in \{1, \dots, |E|\}$.
Then, assign edge weights
$$r(u, v) := \text{sign}(u-v) \cdot C^{-i} \cdot \frac{1}{2n}.$$
where $C$ is some large enough absolute constant.
Antisymmetry is immediate, and we note that this expression is represented in $O(|E| + \log n) = O(|E|)$ bits.
To show that $r$ is indeed an $f$-tiebreaking function, let $G^*$ be the directed weighted graph associated to $r$, let $F$ be an arbitrary set of edge faults.
\begin{itemize}
\item First, we argue that no non-shortest path in $G \setminus F$ can become a shortest path in $G^* \setminus F$.
This part is essentially identical to Theorem \ref{thm:rwtexists}:
since $G$ is unweighted, the initial length of any non-shortest path is at least $1$ less than the length of a shortest path.
In $G^* \setminus F$, each edge weight changes by $\le 1/(2n)$, and thus the length of any simple path $\pi$ changes by $<1/2$.
It follows that no non-shortest path in $G \setminus F$ can become a shortest path in $G^* \setminus F$.

\item Next, we argue that $G^* \setminus F$ has \emph{unique} shortest paths.
Fix nodes $s, t$ and let $\pi_1, \pi_2$ be distinct $s \leadsto t$ (not-necessarily-shortest) paths.
Let $(u, v)$ be the edge in $\pi_1 \setminus \pi_2$ with smallest ID, and suppose without loss of generality that $\text{sign}(u, v) = 1$.
Then, since edge weights decrease geometrically, the weight increase in $(u, v)$ must be strictly larger than the total weight change over other edges in the symmetric difference of $\pi_1, \pi_2$.
Thus $\pi_2$ is shorter than $\pi_1$.
It follows that there cannot be two equally-shortest $s \leadsto t$ paths in $G^* \setminus F$. \qedhere
\end{itemize}
\end{proof}

One can ask whether there is a construction of antisymmetric $f$-fault tiebreaking weight functions that is simultaneously deterministic and which achieves reasonable bit complexity, perhaps closer to the $O(f \log n)$ bound achieved by the isolation lemma.
Proving or refuting such a construction for general graphs is an interesting open problem.
In fact, it is even open to achieve an algorithm that deterministically computes tiebreaking edge weights with bit complexity $O(\log n)$ (or perhaps a bit worse), even in the \emph{non-faulty} $f=0$ setting, and even if we do not require antisymmetry.

That said, we comment that the reweighting used in the previous theorem will only assign $O(|E|)$ different possible edge weights.
Thus, if we give up the standard representation of numbers, we could remap these edge weights and represent them using only $O(\log n)$ bits per edge.

\section{Applications}

\subsection{Fault-Tolerant Preservers and Tiebreaking Schemes}

We will begin by recapping some prior work on fault-tolerant $S \times V$ preservers in light of our RPTS framework (which will be used centrally in our applications).
We supply a new auxiliary theorem to fill in a gap in the literature highlighted by this framework.

Fault-tolerant $S \times V$ preservers were introduced by Parter and Peleg \cite{PP14}, initially called \emph{FT-BFS structures} (since the $f=0$ case is solved by BFS trees).
We mentioned earlier that, for a tiebreaking scheme $\pi$, one gets a valid BFS tree by overlaying the $\{s\} \times V$ shortest paths selected by $\pi$ if $\pi$ is consistent.
To generalize this basic fact, it is natural to ask what properties of an $f$-RPTS yield an $f$-FT $S \times V$ preserver of optimal size, when the structure is formed by overlaying all the replacement paths selected by $\pi$.
For $f=1$, this question was settled by Parter and Peleg \cite{PP14}:
\begin{theorem} [\cite{PP14}] \label{thm:oneftbfs} ~
For any $n$-vertex graph $G = (V, E)$, set of source vertices $S \subseteq V$, and consistent stable $1$-RPTS $\pi$, the $1$-FT $S \times V$ preserver formed by overlaying all $S \times V$ replacement paths selected by $\pi$ under $\le 1$ failing edge has $O\left(n^{3/2} |S|^{1/2}\right)$ edges. This bound is existentially tight. 
%
\end{theorem}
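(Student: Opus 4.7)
The plan is to adapt the original Parter--Peleg argument for FT-BFS structures \cite{PP14}, verifying that consistency and stability of the RPTS $\pi$ are exactly the ingredients their proof requires. First I would unpack the structure implied by these properties. Consistency applied to the empty fault set means that for each source $s \in S$, the selected paths $\{\pi(s,v \mid \emptyset) : v \in V\}$ overlay to a BFS tree $T_s$. Under a single edge fault $e$, stability guarantees that $\pi(s, v \mid \{e\}) = \pi(s,v \mid \emptyset)$ whenever $e \notin \pi(s,v \mid \emptyset)$, so new edges can only arise from those vertices $v$ whose BFS path through $T_s$ uses $e$. For such $v$, applying consistency inside the graph $G \setminus \{e\}$ yields that the replacement path decomposes as a prefix from $s$ up to some ``detour point,'' followed by a coherent suffix: specifically, if a vertex $u$ lies on both $\pi(s,v \mid \{e_1\})$ and $\pi(s,v\mid \{e_2\})$ and neither $e_1$ nor $e_2$ hits the $u \leadsto v$ suffix, then by stability plus consistency both suffixes equal $\pi(u,v\mid \emptyset)$.

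Given this coherence, I would set up the standard ``last edge'' counting. For each source $s$ and each vertex $v$, define the set $L_s(v)$ of edges that appear as the final edge of some replacement path $\pi(s, v \mid \{e\})$ over all $e$; every edge contributed to the preserver by an $s$-originated replacement path is a last edge of \emph{some} $L_s(v')$ (namely its head). The structural lemma above forces the $|L_s(v)|$ paths ending at $v$ with distinct last edges to be pairwise distinct and to share no vertex except $v$ with one another (after their respective branching points); this yields $|L_s(v)|$ internally vertex-disjoint nearly-shortest paths from $s$ to $v$. A standard moment argument then shows that the total number of triples $(s,v,\text{last edge})$, summed over $s \in S$ and $v \in V$, is $O(n^{3/2}|S|^{1/2})$: vertices with large $|L_s(v)|$ are forced by disjointness to lie deep in the graph, and the number of high-``in-degree'' vertices that can be shared across many sources is controlled by a Cauchy--Schwarz / convexity step precisely as in \cite{PP14}.

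The main obstacle is that the Parter--Peleg proof was originally stated for a concrete canonical tiebreaking rule, and here I need to verify that every invocation of the rule's structural properties is subsumed by the abstract consistency and stability axioms. The cleanest way is to prove one intermediate claim: if $P_1 = \pi(s,v\mid \{e_1\})$ and $P_2 = \pi(s,v\mid \{e_2\})$ share a vertex $u$ past their first point of divergence from $T_s$, then their $u \leadsto v$ suffixes coincide. Stability reduces this to the case where neither $e_1$ nor $e_2$ lies on the common suffix, and in that case consistency (applied in $G \setminus\{e_i\}$) identifies both suffixes with $\pi(u,v\mid \emptyset)$. This lemma in hand, the rest of the argument is a direct port.

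For the existential tightness claim, it suffices to cite the lower bound construction of Parter and Peleg \cite{PP14}, which exhibits an $n$-vertex graph and a set $S$ of sources for which \emph{every} $1$-FT $S \times V$ preserver (and hence in particular any one arising from an RPTS) requires $\Omega(n^{3/2}|S|^{1/2})$ edges.
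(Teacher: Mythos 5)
The paper does not give its own proof of this theorem: it is stated as a direct citation of Parter and Peleg \cite{PP14}, and the restatement in terms of an arbitrary consistent, stable RPTS is offered without explicit verification (the authors only remark, in the later proof sketch of Theorem~\ref{thm:expftbfs}, that stability is used throughout and consistency enables the ``last-edge observation''). Your plan---port the \cite{PP14} last-edge counting argument and check that every structural step it needs is a consequence of consistency and stability---is exactly the reading the paper intends, and your citation of the \cite{PP14} lower bound for tightness is correct since that lower bound holds against every preserver, not only RPTS-generated ones.

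One caution on the intermediate coherence lemma. As phrased, the hypothesis ``neither $e_1$ nor $e_2$ hits the $u \leadsto v$ suffix'' is either vacuous (if ``suffix'' means the suffix of $P_i$ itself, which by definition lives in $G \setminus \{e_i\}$) or it does not immediately yield the stated conclusion. What stability actually gives you from $e_2 \notin \pi(u,v\mid\{e_1\})$ and $e_1 \notin \pi(u,v\mid\{e_2\})$ is that both suffixes equal $\pi(u,v\mid\{e_1,e_2\})$, \emph{not} necessarily $\pi(u,v\mid\emptyset)$; to land on $\pi(u,v\mid\emptyset)$ you need the stronger hypothesis $e_1, e_2 \notin \pi(u,v\mid\emptyset)$. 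In the original argument this is arranged by choosing $u$ to be a vertex past which the replacement path has rejoined the tree $T_s$, so that $\pi(u,v\mid\emptyset)$ is precisely the $T_s$-path and the failing edge lies above $u$. You allude to this (``after their respective branching points'') but the lemma statement should be adjusted before the internal vertex-disjointness and the subsequent Cauchy--Schwarz step can be made rigorous. With that fix, the approach is sound and matches what the paper implicitly asks the reader to verify.
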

The extension to $f=2$ was established by Parter \cite{parter2015dual} and Gupta and Khan \cite{GK19}:
\begin{theorem}[\cite{parter2015dual,GK19}] \label{thm:2ftbfs}~
For any $n$-vertex graph $G = (V, E)$ and $S \subseteq V$, there is a $2$-FT $S \times V$ preserver with
$$O\left(n^{5/3} |S|^{1/3}\right) \mbox{~edges}.$$
This bound is also existentially tight.
\end{theorem}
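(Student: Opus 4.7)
The plan is to build the preserver $H$ as the overlay of replacement paths produced by a carefully chosen RPTS, and then bound $|E(H)|$ via a charging argument combined with a branching analysis. First, I would fix a consistent and stable RPTS $\pi$ on $G$ (obtainable via a natural lexicographic rule applied to shortest paths in each faulty subgraph), and set
\[
H \;:=\; \bigcup_{s \in S,\; v \in V,\; F \subseteq E,\; |F| \le 2} \pi(s, v \mid F).
\]
By construction $H$ is a $2$-FT $S \times V$ preserver, so the remaining task is purely combinatorial: bound $|E(H)|$ by $O(n^{5/3}|S|^{1/3})$.

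Next, I would apply a degree-thresholding argument. Set $\Delta := n^{2/3}|S|^{1/3}$ and partition vertices of $H$ into those of degree at most $\Delta$ and those of degree greater than $\Delta$. Edges incident only to low-degree vertices contribute at most $O(n\Delta) = O(n^{5/3}|S|^{1/3})$ to $|E(H)|$, matching the target. The goal therefore reduces to showing that at most $O(n^{2/3}|S|^{1/3})$ vertices can be high-degree, so that their combined incidence is also $O(n^{5/3}|S|^{1/3})$ (since each contributes at most $n$ edges).

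The heart of the proof, and the step I expect to be the main obstacle, is bounding the number of high-degree vertices. The idea is that each high-degree vertex $v$ must serve as a branching point: it lies on $\Omega(\Delta)$ replacement paths $\pi(s_i, u_i \mid F_i)$ leaving $v$ along pairwise distinct edges. Consistency and stability of $\pi$ force each such branch to be witnessed by its own fault set $F_i$ that intersects the other branches in a structured way — otherwise a single selected path would have been shared between multiple pairs, contradicting the branching. I would then peel off one fault at a time: fix a candidate first fault $e_1$, restrict attention to branches witnessed by some $F_i \ni e_1$, and invoke Theorem \ref{thm:oneftbfs} on the subgraph $G \setminus \{e_1\}$ to cap the number of high-degree vertices that can appear there. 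Summing the contributions over all relevant $e_1$ (of which there are few, because only edges lying on some selected shortest path can be critical) yields the desired bound. The subtle technical point is that two simultaneous faults can allow replacement paths to re-merge in ways impossible under a single fault, so the peeling requires a careful accounting of which branches are genuinely new at the second fault versus inherited from the first-fault reduction; this is where prior proofs invest significant structural work.

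Finally, summing the low-degree contribution $O(n\Delta)$ and the high-degree contribution $O(n^{2/3}|S|^{1/3}) \cdot n$ yields $|E(H)| = O(n^{5/3}|S|^{1/3})$, proving the upper bound. The claimed existential tightness follows from a separate lower bound construction based on incidence geometry in the spirit of the \szt\ theorem applied to dense point--line arrangements, analogous to the tightness construction for the $1$-FT case; I would not attempt to reproduce it in the upper-bound argument above.
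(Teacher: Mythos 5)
There is a genuine gap, and it is at the very first step: you fix an \emph{arbitrary} consistent and stable RPTS $\pi$ (e.g., lexicographic) and define $H$ as the overlay of all its replacement paths for $|F|\le 2$, then aim to show $|E(H)| = O(n^{5/3}|S|^{1/3})$. No such argument can succeed, because consistency and stability alone do not suffice for the $2$-fault bound. This paper proves exactly that (Theorem \ref{thm:cslowerbound}, proof in Appendix \ref{app:conslb}): there exist graphs and RPTSes that are consistent, stable, and even symmetric whose overlaid $S \times V$ replacement paths under $\le 2$ faults contain $\Omega\left(n^{2-1/4}|S|^{1/4}\right) = \Omega\left(n^{7/4}|S|^{1/4}\right)$ edges, which exceeds $n^{5/3}|S|^{1/3}$ whenever $|S| \ll n$. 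The matching upper bound for arbitrary consistent stable RPTSes is only $O\left(n^{2-1/2^f}|S|^{1/2^f}\right)$ (Theorem \ref{thm:expftbfs}), i.e., $O\left(n^{7/4}|S|^{1/4}\right)$ for two faults. As the paper notes explicitly, the upper bound of Theorem \ref{thm:2ftbfs} in \cite{parter2015dual,GK19} relies on a tiebreaking scheme with additional properties beyond stability and consistency (the ``preferred path'' tiebreaking of \cite{GK19}), so the choice of $\pi$ is not a formality one can dispense with by a lexicographic rule --- it is where the theorem lives or dies.

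The remainder of your sketch does not repair this. The degree-thresholding shell with $\Delta = n^{2/3}|S|^{1/3}$ is a reasonable outline, but the claim that only $O(n^{2/3}|S|^{1/3})$ vertices can be high-degree is precisely the structural statement that fails for general consistent stable schemes (the Appendix \ref{app:conslb} construction concentrates $\Omega(n^{7/4})$ edges, for $|S|=1$, into a bipartite gadget of high-degree vertices). Your ``peel off one fault and invoke Theorem \ref{thm:oneftbfs} on $G \setminus \{e_1\}$'' step is exactly the place you flag as the main obstacle, and you leave it unresolved; moreover, restricting to branches witnessed by fault sets containing a fixed $e_1$ and summing over $e_1$ double-counts in a way that, without the extra tiebreaking structure, only reproduces the weaker $n^{7/4}|S|^{1/4}$-type bound. (The statement itself is cited from \cite{parter2015dual,GK19}; the paper offers no proof of it, but its own lower bound directly refutes the route you propose.)
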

In contrast to Theorem \ref{thm:oneftbfs} which works with any stable and consistent $1$-RPTS, the upper bound part of Theorem \ref{thm:2ftbfs} uses additional properties beyond stability and consistency (in \cite{GK19}, called ``preferred'' paths).
For general fixed $f$, Parter \cite{parter2015dual} gave examples of graphs where any preserver $H$ has
$$|E(H)| = \Omega\left(n^{2 - \frac{1}{f+1}} |S|^{\frac{1}{f+1}} \right).$$
It is a major open question in the area to prove or refute tightness of this lower bound for any $f\geq 3$.
The only general upper bound is due to Bodwin, Grandoni, Parter, and Vassilevska-Williams \cite{bodwin2017preserving}, who gave a randomized algorithm that constructed preservers of size
$$|E(H)| = \Oish\left(n^{2 - 1/2^f} |S|^{1/2^f}\right)$$
in $\poly(n, f)$ time.
In fact, as we explain next, a reframing of this argument shows that the $\log$ factors are shaved by the algorithm that simply overlays the paths selected by a consistent stable RPTS.
(Note: naively, the runtime to compute all these paths is $n^{O(f)}$, hence much worse than the one obtained in \cite{bodwin2017preserving} and polynomial only if we treat $f$ as fixed.)

\begin{theorem} [\cite{bodwin2017preserving}] \label{thm:expftbfs}
For any $n$-vertex graph $G = (V, E)$, $S \subseteq V$, fixed nonnegative integer $f$, and consistent stable $f$-RPTS $\pi$, the $f$-FT $S \times V$ preserver formed by overlaying all $S \times V$ replacement paths selected by $\pi$ under $\le f$ failing edges has
$$O\left(n^{2 - 1/2^f} |S|^{1/2^f}\right) \mbox{~edges.}$$
\end{theorem}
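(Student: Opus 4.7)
The plan is to prove Theorem \ref{thm:expftbfs} by induction on $f$, adapting the argument of Bodwin, Grandoni, Parter, and Vassilevska-Williams \cite{bodwin2017preserving} into the deterministic tiebreaking framework. Their argument loses logarithmic factors when using a random ``dense'' subset of terminal pairs or sources; here the consistent stable RPTS $\pi$ will serve as a deterministic substitute for that sampling step.

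For the base case $f = 0$, consistency of $\pi(\cdot, \cdot \mid \emptyset)$ implies that the overlay of paths $\{\pi(s, v) : v \in V\}$ for a fixed source $s$ forms a tree with at most $n - 1$ edges. Summing over $s \in S$ gives $|E(H)| \le (n-1)|S| \le n^{2 - 1/2^0} |S|^{1/2^0}$, as required.

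For the inductive step, assume the bound for $f - 1$. The argument rests on the following closure property of consistent stable RPTSes, which is immediate from the definitions: for any fixed edge $e' \in E$, the restriction $\pi'(u, w \mid F') := \pi(u, w \mid F' \cup \{e'\})$ for $F' \subseteq E(G \setminus \{e'\})$ is itself a consistent stable RPTS on $G' := G \setminus \{e'\}$. Hence the inductive hypothesis applies to $(f-1)$-FT preservers on graphs of the form $G \setminus \{e'\}$ with respect to $\pi'$. Next, partition $E(H)$ into \emph{old} edges (those lying on some $\pi(s, v \mid F)$ with $|F| \le f-1$) and \emph{new} edges (those requiring $|F| = f$ in every witness). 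The old edges constitute exactly the $(f-1)$-FT $S\times V$ preserver under $\pi$, so by induction they number $O(n^{2-1/2^{f-1}} |S|^{1/2^{f-1}})$, which is absorbed into the target bound.

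For new edges, stability allows us to pick for each $e$ a witness $(s, v, F_e)$ with $|F_e| = f$ and identify a canonical ``earliest fault'' $e^\star = e^\star(e) \in F_e$ along $\pi(s, v \mid F_e)$. The edge $e$ then lies in the $(f-1)$-FT preserver of $G \setminus \{e^\star\}$ under $\pi'$, but with a potentially new source set (roughly, those $s$ whose replacement paths funnel through $e^\star$). Summing the inductive bounds naively across all $e^\star \in E$ overcounts drastically, so we must show that the $e^\star$'s concentrate on few edges and that, for each such $e^\star$, the effective auxiliary source set is small. In \cite{bodwin2017preserving} this concentration is proved via a random hitting set over sources; here we instead use consistency to argue that the collection of ``funnel sources'' through $e^\star$ must all lie on a small number of shared shortest-path prefixes, giving a combinatorial (and deterministic) bound on the auxiliary source set.

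The main obstacle is the quantitative accounting: balancing the number of distinct first-faults $e^\star$ that carry substantial load against the per-$e^\star$ inductive cost so that the total matches the target exponent $2 - 1/2^f$. This is exactly the place where the randomized proof of \cite{bodwin2017preserving} invokes a union bound (producing $\log n$ factors), and the gain in our setting comes from replacing that step with the deterministic funnel-source bound made available by the consistency and stability of $\pi$. Once this balance is carried out the recursion $T(n, |S|, f) \leq \widetilde{O}(T(n, |S|, f-1) \cdot (n/|S|)^{1/2^f})$ of the randomized proof goes through without the $\widetilde{O}$, yielding the claimed clean bound $O(n^{2-1/2^f} |S|^{1/2^f})$.
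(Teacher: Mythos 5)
Your proposal has a genuine gap at precisely the step that carries all the weight. The partition into old/new edges, the restriction of $\pi$ to $G \setminus \{e^\star\}$, and the base case are all fine, but the inductive step is never actually carried out: you acknowledge that summing the $(f-1)$-level bound over all choices of $e^\star$ ``overcounts drastically,'' and then assert that consistency yields a ``deterministic funnel-source bound'' showing the sources funneling through each $e^\star$ lie on few shared prefixes, with the right quantitative trade-off to recover the exponent $2 - 1/2^f$. No such bound is stated, let alone proved, and it is exactly the content of the theorem; as written, the recursion $T(n,|S|,f) \le T(n,|S|,f-1)\cdot (n/|S|)^{1/2^f}$ is an assertion, not a consequence of anything you established. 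There is also a structural mismatch with the argument you claim to be adapting: in \cite{bodwin2017preserving} the random sampling is a hitting set of \emph{nodes} chosen to hit long \emph{final detours} of replacement paths (the sampled nodes then act as auxiliary sources for a lower-level preserver), not a sampled dense set of sources with a union bound; so it is not clear what your ``funnel-source'' derandomization would even replace, and a minor symptom of this is that ``the earliest fault $e^\star \in F_e$ along $\pi(s,v\mid F_e)$'' is ill-defined, since that path avoids $F_e$.

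For contrast, the paper does not derandomize or re-prove the recursion at all. It takes the randomized construction of \cite{bodwin2017preserving} as a black box, but with the hitting-set sample reduced from $\Theta(n\log n/\ell)$ to $\Theta(n/\ell)$ nodes, producing a random subgraph $H'$ of size $O(n^{2-1/2^f}|S|^{1/2^f})$ (no logs) that contains any fixed replacement path $\pi(s,v\mid F)$ with constant probability. Since every edge of the deterministic overlay $H$ lies on some path selected by $\pi$, each edge of $H$ appears in $H'$ with constant probability, so $|E(H)| = O(|E(H')|)$ and the clean bound follows. If you want to salvage your inductive route, you would need to actually formulate and prove the concentration statement for the first-fault decomposition; otherwise the log-free bound is obtained much more cheaply by the constant-probability coverage argument above.
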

\begin{proof} [Proof Sketch]
At a very high level, the original algorithm in \cite{bodwin2017preserving} can be viewed as follows.
Let $\pi$ be a consistent stable RPTS (stability is used throughout the argument, and consistency is mostly used to enable the ``last-edge observation'' -- we refer to \cite{bodwin2017preserving} for details on this observation).
Consider all the replacement paths $\{\pi(s, v \mid F)\}$ for $s \in S, v \in V$, and fault sets $F$, and consider the ``final detours'' of these paths -- the exact definition of a final detour is not important in this exposition, but the final detour is a suffix of the relevant path.
We then separately consider paths with a ``short'' final detour, of length $\le \ell$ for some appropriate parameter $\ell$, and ``long'' final detours of length $>\ell$.

It turns out that we can add all edges in short final detours to the preserver without much trouble.
To handle the edges in long final detours, we randomly sample a node subset $R$ of size $|R| = \Theta((n \log n) / \ell)$, and we argue that it hits each long detour with high probability.
The subsample $R$ is then used to inform some additional edges in the preserver, which cover the paths with long detours.

Let us imagine that we run the same algorithm, but we only sample a node subset of size $|R| = \Theta(n/\ell)$.
This removes the $\log$ factors from the size of the output preserver, but in exchange, any given path with a long detour is covered in the preserver with only \emph{constant} probability.
In other words: we have a randomized construction that gives an incomplete ``preserver'' $H'$ with only
$$|E(H')| = O\left(n^{2 - 1/2^f} |S|^{1/2^f}\right)$$
edges, which includes any given replacement path $\pi(s, v \mid F)$ with constant probability or higher.
Thus, if we instead consider our original algorithm for the preserver $H$, which includes every replacement path $\pi(s, v \mid F)$ with probability $1$, then $H$ will have more edges than $H'$ by only a constant factor.
So we have
$$|E(H)| = O\left(n^{2 - 1/2^f} |S|^{1/2^f}\right)$$
as well, proving the theorem.
\end{proof}

Plugging in $f=1$ to the bound in the previous theorem recovers the result of Parter and Peleg \cite{PP14}, but this bound is non-optimal already for $f=2$. It is natural to ask whether the analysis of consistent stable RPTSes can be improved, or whether additional tiebreaking properties are necessary.
As a new auxiliary result, we show the latter: consistent and stable tiebreaking schemes cannot improve further on this bound, and in particular these properties alone are non-optimal at least for $f=2$. 

\begin{theorem} \label{thm:cslowerbound}
For any nonnegative integers $f,\sigma$, there are $n$-vertex unweighted (directed or undirected) graphs $G = (V, E)$, vertex subsets $S \subseteq V$ of size $|S| = \sigma$, and $f$-RPTSes that are consistent, stable, and (for undirected graphs) symmetric that give rise to $f$-FT $S \times V$ preservers on $\Omega(n^{2 - 1/2^f} \sigma^{1/2^f})$ edges.
\end{theorem}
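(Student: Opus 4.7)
The plan is to proceed by induction on $f$, constructing an explicit graph $G_f$, source set $S_f$, and RPTS $\pi_f$ whose induced preserver has the claimed size. For the base case $f=1$, I would invoke the existing lower bound of Parter and Peleg \cite{PP14}, which exhibits $n$-vertex graphs with $|S|=\sigma$ in which every $1$-FT $S \times V$ preserver has $\Omega(n^{3/2}\sigma^{1/2})$ edges, matching $n^{2-1/2^1}\sigma^{1/2^1}$. Because the shortest and replacement paths in their graphs are essentially unique, the single compatible RPTS is trivially stable, consistent, and symmetric; hence the lower bound applies to RPTS-generated preservers in particular. For a purely directed construction I would use the directed analogue from \cite{parter2015dual}.

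For the inductive step, suppose we have $G_{f-1}, S_{f-1}, \pi_{f-1}$ yielding a preserver $H_{f-1}$ of the target size for $f-1$ faults. I would build $G_f$ by attaching, to each edge $e=(u,v) \in H_{f-1}$, a ``long alternate route'' gadget: a disjoint $u$–$v$ path of length $\ell$ through freshly introduced vertices, calibrated so that $e$ remains the unique shortest $u$–$v$ connection in $G_f$ and the alternate route becomes shortest in $G_f\setminus\{e\}$. The RPTS $\pi_f$ inherits $\pi_{f-1}$ on the direct skeleton and routes through each alternate path whenever its direct edge is faulted. Nesting these gadgets lets $f$ failures activate a sequence of $f$ alternate detours, so that, matching the recursive sampling in the proof of Theorem~\ref{thm:expftbfs}, each preserver edge of $H_{f-1}$ contributes $\Omega(\ell)$ new edges to the $f$-FT preserver of $G_f$.

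The main obstacle is parameter calibration together with verifying the coordination properties. Choosing $\ell = \Theta\!\left(n_{f-1}^{1/2^f}\right)$ simultaneously needs to (i) make each alternate path a genuine shortest path in $G_f\setminus F$ for every relevant $F$ (not bypassed by even shorter detours elsewhere in $H_{f-1}$), which constrains $\ell$ from above; and (ii) realize the recurrence
\[
n_f^{2 - 1/2^f}\,\sigma^{1/2^f} \;\asymp\; \ell \cdot n_{f-1}^{2 - 1/2^{f-1}}\,\sigma^{1/2^{f-1}},
\]
which with $n_f \asymp \ell\cdot |H_{f-1}|/n_{f-1}$ (or a similar bookkeeping) pins $\ell$ down. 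A short calculation should confirm this is consistent. For the RPTS properties: stability holds because switching to the alternate path happens only when its direct edge is actually in $F$; consistency is inherited from $\pi_{f-1}$ together with the gadget being a simple path with canonical tiebreaking along it; symmetry in the undirected setting is enforced by giving each gadget a single canonical vertex ordering, so that $\pi_f(s,t\mid F) = \pi_f(t,s\mid F)$.

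As a fallback, if the layered gadget design becomes too delicate to verify directly, I would instead aim to construct a graph whose shortest \emph{and} replacement paths under every $F$ of size $\le f$ are globally unique, using a careful recursive edge-weighting followed by subdivision to encode the weights in an unweighted graph. Uniqueness collapses the RPTS to a single forced choice, automatically satisfying all coordination properties; the task then reduces to the purely combinatorial problem of designing the underlying weighted skeleton with $\Omega(n^{2-1/2^f}\sigma^{1/2^f})$ pairwise edge-distinct forced replacement paths, which again matches the recursion cleanly when the weighting is chosen to mimic the sampling thresholds in the upper bound analysis.
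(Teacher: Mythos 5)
There is a genuine gap, and it is conceptual rather than a matter of calibration. Theorem \ref{thm:cslowerbound} is \emph{not} a lower bound on all preservers of the constructed graph; it asserts the existence of a particular, adversarially chosen consistent and stable RPTS whose \emph{overlay} is large, showing that the analysis of Theorem \ref{thm:expftbfs} cannot be improved for this class of tiebreaking schemes. For $f \ge 2$ the claimed bound $\Omega(n^{2-1/2^f}\sigma^{1/2^f})$ strictly exceeds the optimal preserver size: e.g.\ for two faults, Theorem \ref{thm:2ftbfs} guarantees preservers of size $O(n^{5/3}\sigma^{1/3})$, which is $o(n^{7/4}\sigma^{1/4})$ whenever $\sigma = o(n)$. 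This kills your fallback outright: if all replacement paths under every $|F|\le f$ are globally unique, then every valid preserver must contain their union, so that union is bounded by the optimal preserver size and can never reach the target bound for $f\ge 2$. Your main inductive route has the same defect in softer form: you calibrate each gadget so that the alternate route \emph{becomes shortest} (i.e.\ forced) in $G_f\setminus\{e\}$, which again makes the large edge count a property of the graph rather than of a badly-broken tie, and is therefore capped by the true preserver bounds. The base case $f=1$ is fine only because there the RPTS-overlay bound happens to coincide with the existential lower bound of \cite{PP14}; the induction is exactly where the distinction between ``every preserver is large'' and ``this tiebreaking's overlay is large'' becomes essential, and your recursion (the bookkeeping $n_f \asymp \ell\cdot|H_{f-1}|/n_{f-1}$ and the claim that each edge of $H_{f-1}$ contributes $\Omega(\ell)$ new edges) is not developed enough to rescue it.

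What is actually needed, and what the paper does, is the opposite of uniqueness: build a recursive graph $G_f(d)$ (path $P_f$ with $d$ disjoint copies of $G_{f-1}(\sqrt d)$ hung off it by paths of calibrated lengths) in which \emph{all} root-to-leaf paths have exactly the same length, attach a complete bipartite graph $B$ between the $\Theta((n/f)^{1-1/2^f})$ leaves and a linear-size vertex set $X$, and assign each leaf $z_j$ an $f$-edge fault label $\LAB_f(z_j)$ on the spine so that after faulting $\LAB_f(z_j)$ the surviving root-to-leaf paths end exactly at $z_1,\dots,z_j$. A tiny, carefully ordered weight perturbation on the bipartite edges (serving only as the tiebreaker, so the graph stays unweighted) then defines a consistent, stable, symmetric RPTS that, under fault set $\LAB_f(z_j)$, routes the selected $s\leadsto x_i$ replacement path through $z_j$ for every $x_i$; ranging over $j$ and $i$, the selected paths sweep out every edge of $B$, giving $\Omega((n/f)^{2-1/2^f})$ edges, and the multi-source extension takes $\sigma$ disjoint copies. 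The essential ingredients you are missing are the deliberate creation of massive ties and the adversarial (but consistent and stable) resolution of them; any construction that eliminates ties cannot prove this theorem for $f\ge 2$.
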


The proof of this theorem is given in Appendix \ref{app:conslb}.
We also note that the lower bound of Theorem \ref{thm:cslowerbound} holds under a careful selection of one particular ``bad" tiebreaking scheme, which happens to be consistent, stable, and symmetric.
This lower bound breaks, for example, for many more specific classes of tiebreaking: for example, if one uses small perturbations to edge weights to break the ties. It is therefore intriguing to ask whether it is possible to provide $f$-FT $S \times V$ preservers with optimal edge bounds using random edge perturbations to break the replacement paths ties.  The dual failure case of $f=2$ should serve here as a convenient starting point. It would be very interesting, for example, to see if one can replace the involved tiebreaking scheme of \cite{parter2015dual,GK19} (i.e., of preferred paths) using random edge perturbations while matching the same asymptotic edge bound (or alternatively, to prove otherwise), especially because these random edge perturbations enable restorable tiebreaking.

\subsection{Subset Replacement Path Algorithm}

In the $\srp$ problem, the input is a graph $G = (V, E)$ and a set of source vertices $S$, and the output is: for every pair of vertices $s, t \in S$ and every failing edge $e \in E$, report $\dist_{G \setminus \{e\}}(s, t)$.
We next present our algorithm for $\srp$.
We will use the following algorithm in prior work, for the single-pair setting:

\begin{theorem} [\cite{HS01}] \label{thm:singlepairalg}
When $|S|=2$, there is an algorithm that solves $\srp(G, S)$ in time $\Oish(m + n)$.
\end{theorem}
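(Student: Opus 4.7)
The plan is to note that with $|S| = 2$, say $S = \{s, t\}$, the problem $\srp(G, S)$ asks only for $\dist_{G \setminus \{e\}}(s, t)$ over all $e \in E$ (the diagonal pairs in $S \times S$ yield trivial distances of $0$). This is exactly the classical single-pair replacement paths problem, for which $\Oish(m+n)$-time algorithms are known \cite{malik1989k}. I would implement it along the lines below.

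First, run BFS from $s$ and from $t$ in $O(m+n)$ time, obtaining distance arrays $d_s, d_t$ and BFS trees $T_s, T_t$ chosen so that both contain a common shortest path $\pi = (u_0 = s, u_1, \ldots, u_L = t)$. For any edge $e \notin \pi$, the path $\pi$ itself survives in $G \setminus \{e\}$, so the replacement distance is simply $L$ and no further work is needed. The interesting edges are $e_i = (u_i, u_{i+1})$ on $\pi$. For these, I would use a detour enumeration: for each edge $(a,b) \in E$, compute in a single DFS over each tree the indices $p(a) := \max\{i : u_i \text{ lies on the } s\text{-to-}a \text{ path in } T_s\}$ and $q(b) := \min\{i : u_i \text{ lies on the } b\text{-to-}t \text{ path in } T_t\}$. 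Whenever $p(a) \le i < q(b)$, concatenating the $T_s$-path from $s$ to $a$, the edge $(a,b)$, and the $T_t$-path from $b$ to $t$ gives a walk of length $d_s(a) + 1 + d_t(b)$ in $G \setminus \{e_i\}$, immediately yielding $\dist_{G \setminus \{e_i\}}(s,t) \le d_s(a) + 1 + d_t(b)$.

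The main technical obstacle is the matching lower bound: that the minimum over all such detour-compatible $(a,b)$ is actually equal to $\dist_{G \setminus \{e_i\}}(s,t)$. Given an optimal replacement path $P = (s = v_0, \ldots, v_k = t)$ in $G \setminus \{e_i\}$, let $A_i := \{v : e_i \notin T_s\text{-path to } v\}$ and $B_i := \{v : e_i \notin T_t\text{-path from } v\}$, and take the largest $j^*$ with $v_{j^*} \in A_i$. If $v_{j^*+1} \in B_i$, then the edge $(v_{j^*}, v_{j^*+1})$ of $P$ is the desired crossing, since the BFS bounds $d_s(v_{j^*}) \le j^*$ and $d_t(v_{j^*+1}) \le k - j^* - 1$ give $d_s(v_{j^*}) + 1 + d_t(v_{j^*+1}) \le k$. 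Otherwise, a short case analysis pivoting to the symmetric $t$-end index $\min\{j : v_j \in B_i\}$, together with the subadditivity of distances along $P$, produces the needed crossing edge.

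Finally, once the characterization is in place, for each $i \in \{0, \ldots, L-1\}$ we need the minimum value of $d_s(a) + 1 + d_t(b)$ over edges $(a,b)$ with $p(a) \le i < q(b)$. Viewing each edge as a weighted interval $[p(a), q(b) - 1]$ on $\{0, \ldots, L-1\}$, this is a standard interval-stabbing / range-minimum task over $m$ weighted intervals, solvable in $\Oish(m + L) = \Oish(m+n)$ time by a left-to-right sweep maintaining a priority queue keyed by interval weight (inserting when an interval opens and removing when it closes). Combining the phases gives the claimed $\Oish(m+n)$ total runtime.
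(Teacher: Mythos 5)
The paper cites this theorem to \cite{malik1989k} (and effectively also \cite{HS01}) and uses it as a black box, so there is no in-paper proof to compare against; what you have written is a from-scratch reconstruction of the classical single-pair replacement-path algorithm, and its overall shape (two BFS trees sharing a canonical shortest path $\pi$, detour edges indexed by where they branch off $T_s$ and rejoin $T_t$, and an interval-stabbing sweep over $\pi$) is exactly the standard approach from that literature. The upper-bound direction and the $\Oish(m+n)$ running-time accounting are fine.

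The one real gap is the lower bound. You correctly pick the largest $j^*$ with $v_{j^*}\in A_i$ and observe that if $v_{j^*+1}\in B_i$ then $(v_{j^*},v_{j^*+1})$ is a crossing edge with $d_s(v_{j^*})+1+d_t(v_{j^*+1})\le k$, but the ``otherwise'' branch is only gestured at; the ``pivot to the symmetric $t$-end index'' idea is not an argument. In fact, the otherwise branch is \emph{vacuous}, and this is the lemma you are missing: $A_i\cup B_i=V$, equivalently no vertex $w$ has both its $T_s$-path and its $T_t$-path through $e_i=(u_i,u_{i+1})$. The proof is a short distance computation. If both tree paths use $e_i$, then since $T_s,T_t$ are BFS trees containing $\pi$, one gets $\dist_G(u_{i+1},w)=d_s(w)-(i+1)$ and $\dist_G(u_i,w)=d_t(w)-(L-i)$. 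Triangle inequality via $u_i$ gives $d_s(w)\le i+\dist_G(u_i,w)$, i.e.\ $\dist_G(u_{i+1},w)+1\le \dist_G(u_i,w)$; triangle inequality via $u_{i+1}$ gives $d_t(w)\le (L-i-1)+\dist_G(u_{i+1},w)$, i.e.\ $\dist_G(u_i,w)+1\le \dist_G(u_{i+1},w)$. Adding the two yields $2\le 0$, a contradiction. With this lemma in hand, $v_{j^*+1}\notin A_i$ (by maximality of $j^*$) forces $v_{j^*+1}\in B_i$, so the crossing edge always exists and your argument closes. Without it, your case analysis is incomplete and the correctness of the sweep's characterization is not established.
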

\begin{proof} [Proof Sketch]
Although we will use this theorem as a black box, we sketch its proof anyways, for completeness.
Let $S = \{s, t\}$, and suppose we perturb edge weights in $G$ to make shortest paths unique.
Compute shortest path trees $T_s, T_t$ leaving $s, t$ respectively.
Applying the weighted restoration lemma (Theorem \ref{thm:wtdrestorationintro}), every edge $(u, v)$ \emph{uniquely} defines a candidate $s \leadsto t$ replacement path, by concatenating the $s \leadsto u$ path in $T_s$ to the edge $(u, v)$ to the $v \leadsto t$ replacement path in $T_t$.
Notice that, after computing distances from $s$ and $t$, we can determine the length of this candidate path in constant time.
We may therefore quickly sort the edges $\{(u, v)\}$ by length of their associated candidate replacement path.

We then consider the edges along the true shortest path $\pi(s, t)$; initially all these edges are \emph{unlabeled}.
Stepping through each edge $(u, v)$ in our list in order, we look at its associated candidate replacement path $q \setminus (u, v)$.
For every still-unlabeled edge $e \in \pi(s, t)$ that is \emph{not} included in $q$, we label that edge with the length of $q$.
That is: since $q$ is the shortest among all candidate replacement paths that avoids $e$, it must witness the $s \leadsto t$ distance in the event that $e$ fails.

It is nontrivial to quickly determine the edge(s) along $\pi(s, t)$ that are both unlabeled and not contained in $q$.
This requires careful encoding of the shortest path trees $T_s, T_t$ in certain data structures.
We will not sketch these details; we refer to \cite{HS01} for more information.
\end{proof}

We then use Algorithm \ref{alg:srp} to solve $\srp$ in the general setting.

\begin{algorithm}
\caption{$\srp$ Algorithm \label{alg:srp}}
\label{ALG:basic}
\begin{algorithmic}[1]
\REQUIRE Undirected unweighted graph $G = (V, E)$ on $n$ vertices, vertex subset $S \subseteq V$ of size $|S| = \sigma$.

\STATE Let $\pi$ be a consistent stable $1$-restorable RPTS.

\FORALL{$s \in S$}
	\STATE Compute an outgoing shortest path tree $T_s$, rooted at $s$, using the (non-faulty) shortest paths selected by $\pi$.
\ENDFOR

\FORALL{$s_1, s_2 \in S$}
	\STATE Using Theorem \ref{thm:singlepairalg}, solve $\srp$ on vertices $\{s_1, s_2\}$ in the graph $T_{s_1} \cup T_{s_2}$.
\ENDFOR

\end{algorithmic}
\end{algorithm}

\begin{theorem}
Given an $n$-vertex, $m$-edge graph $G$ and $|S| = \sigma$ source vertices, Algorithm \ref{alg:srp} solves $\srp$ in $O(\sigma m)$ + $\Oish(\sigma^2 n)$ time.
\end{theorem}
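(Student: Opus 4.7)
The plan is to verify correctness of Algorithm~\ref{alg:srp} via restorability, then bound the runtime by summing the costs of the three phases. The central correctness claim I would isolate is that for every $s_1, s_2 \in S$ and every edge $e \in E$,
\[
\dist_{(T_{s_1} \cup T_{s_2}) \setminus \{e\}}(s_1, s_2) \;=\; \dist_{G \setminus \{e\}}(s_1, s_2).
\]
The $\ge$ direction is free since $T_{s_1} \cup T_{s_2}$ is a subgraph of $G$. For $\le$: if $e$ does not lie on $\pi(s_1, s_2)$, then $\pi(s_1, s_2) \subseteq T_{s_1}$ already realizes $\dist_G(s_1, s_2)$ inside $T_{s_1} \cup T_{s_2} \setminus \{e\}$. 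Otherwise, by Theorem~\ref{thm:restorabletiebreaking} applied with $F = \{e\}$ and $F' = \emptyset$, there is a vertex $x$ such that $\pi(s_1, x) \cup \pi(s_2, x)$ is a valid replacement $s_1 \leadsto s_2$ path avoiding $e$. Because each outgoing shortest path tree $T_{s_i}$ in $G'$ contains the unique directed $s_i \leadsto x$ shortest path, both $\pi(s_1, x) \subseteq T_{s_1}$ and $\pi(s_2, x) \subseteq T_{s_2}$, so (as an undirected walk) the replacement path sits inside $T_{s_1} \cup T_{s_2}$, giving the $\le$ direction. With this lemma in hand, the single-pair calls in the last loop correctly recover $\dist_{G \setminus \{e\}}(s_1, s_2)$ for every $e$.

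For the runtime, the reweighting step costs $O(m)$, and the $\sigma$ outgoing Dijkstra computations on the directed weighted graph $G'$ take $O(\sigma(m + n \log n))$ in total, i.e., $O(\sigma m) + \Oish(\sigma n)$, which is absorbed into the theorem's $\Oish(\sigma^2 n)$ term. For each of the $\sigma^2$ pairs, the auxiliary graph $T_{s_1} \cup T_{s_2}$ has $O(n)$ vertices and $O(n)$ edges (being the union of two trees), so Theorem~\ref{thm:singlepairalg} runs on it in $\Oish(n)$ time, for a total of $\Oish(\sigma^2 n)$ over all pairs. Summing the three phases yields the claimed bound of $O(\sigma m) + \Oish(\sigma^2 n)$.

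The only real substance of the argument is the correctness reduction, and in particular the invocation of restorability: without it, a replacement $s_1 \leadsto s_2$ path avoiding $e$ could in principle avoid both $T_{s_1}$ and $T_{s_2}$, so that running the single-pair algorithm on their union would miss the true $\srp$ answer and one would have to pass a much denser auxiliary subgraph. Once the restorability-based containment $\pi(s_1, x) \cup \pi(s_2, x) \subseteq T_{s_1} \cup T_{s_2}$ is in place, the remainder is routine bookkeeping and I anticipate no further obstacle.
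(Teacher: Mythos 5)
Your proposal is correct and follows essentially the same approach as the paper's proof: invoke restorability of the antisymmetric-reweighted RPTS to conclude that for every pair $s_1,s_2 \in S$ and failing edge $e$, some concatenation $\pi(s_1,x)\cup\pi(s_2,x)$ is a valid replacement path lying inside $T_{s_1}\cup T_{s_2}$, and then bound the runtime by $O(m)$ for reweighting, $O(\sigma(m+n\log n))$ for the $\sigma$ Dijkstra calls, and $\Oish(\sigma^2 n)$ for running the single-pair algorithm on the $O(n)$-edge unions. The only cosmetic difference is that you separate out the trivial case $e\notin\pi(s_1,s_2)$, which the paper handles implicitly since restorability (with $F=\{e\}$, $F'=\emptyset$) already covers it.
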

\begin{proof}
First we show correctness.
Since $\pi$ is $1$-restorable, for any $s_1, s_2 \in S$ and failing edge $e$, there exists a vertex $x$ such that $\pi(s_1, x) \cup \pi(s_2, x)$ form a shortest $s_1 \leadsto s_2$ path in $G \setminus \{e\}$.
Since $\pi(s_1, x), \pi(s_2, x)$ are both contained in the graph $T_{s_1} \cup T_{s_2}$, the algorithm correctly outputs a shortest $s_1 \leadsto s_2$ replacement path avoiding $e$.

We now analyze runtime.
Computing $\pi$ takes $O(m)$ time, under Theorem \ref{thm:rwtexists}, due to the selection of random edge weights.\footnote{We analyze in the real-RAM computational model, but by the discussion in Section \ref{sec:detbit}, the (randomized) time to compute $\pi$ would be $\Oish(m)$ if one attends to bit precision.}
Using Dijkstra's algorithm, it takes $O(m + n \log n)$ time to compute an outgoing shortest path tree $T_s$ in $G'$ for each of the $|S| = \sigma$ source vertices, which costs $O(\sigma(m + n \log n))$ time in total.
Finally, we solve $\srp$ for each pair of vertices $s_1, s_2 \in S$, on the graph $T_{s_1} \cup T_{s_2}$ which has only $O(n)$ edges.
By Theorem \ref{thm:singlepairalg} each pair requires $\Oish(n)$ time, for a total of $\Oish(\sigma^2 n)$ time.
\end{proof}

\subsection{Distance Labeling Schemes}
In this section, we show how to use our $f$-restorable RPTSes to provide fault-tolerant (exact) distance labels of sub-quadratic size. We have:
\begin{theorem} 
For any fixed nonnegative integer $f \geq 0$, and $n$-vertex unweighted undirected graph, there is an $(f+1)$-FT distance labeling scheme that assigns each vertex a label of
$$O\left(n^{2 - 1/2^f} \log n\right) \text{ bits}.$$
\end{theorem}
\begin{proof}
Let $\pi$ be a consistent stable restorable $f$-RPTS from Theorem \ref{thm:restorabletiebreaking} on the input graph $G = (V, E)$.
For each vertex $s \in V$, compute an $f$-FT $\{s\} \times V$ preserver by overlaying the $\{s\} \times V$ replacement paths selected by $\pi$ with respect to $\le f$ edge failures.
The label of $s$ just explicitly stores the edges of this preserver; the bound on label size comes from the number of edges in Theorem \ref{thm:expftbfs} and the fact that $O(\log n)$ bits are required to describe each edge.

To compute $\dist_{G \setminus F}(s, t)$, we simply read the labels of $s, t$ to determine their associated preservers and we union these together.
By the $f$-restorability of $\pi$, there exists a valid $s \leadsto t$ replacement path avoiding $F$ formed by concatenating a path included in the preserver of $s$ and a path included in the preserver of $t$.
So, it suffices to remove the edges of $F$ from the union of these two preservers and then compute $\dist(s, t)$ in the remaining graph.
\end{proof}

\subsection{Applications in Fault-Tolerant Network Design}
We next show applications in fault-tolerant network design.
The following theorem extends Theorem 2 in \cite{bodwin2017preserving} which was shown (by a very different proof) for $f=1$, to any $f\geq 1$.
\begin{theorem} \label{thm:oneftss}
Given an $n$-vertex graph $G = (V, E)$, a set of source vertices $S \subseteq V$, and a fixed nonnegative integer $f$, there is an $(f+1)$-FT $S \times S$ distance preserver of $G, S$ on $O\left( n^{2-1/2^f} |S|^{1/2^f} \right)$ edges.
\end{theorem}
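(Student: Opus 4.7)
The plan is to use the restorable tiebreaking scheme from Theorem \ref{thm:restorabletiebreaking} together with the $S\times V$ preserver bound of Theorem \ref{thm:expftbfs}, in a ``one fault reduction'' style: we build an $(f+1)$-FT $S\times S$ preserver by taking exactly the same subgraph that would be an $f$-FT $S\times V$ preserver, and we use restorability to absorb the extra fault.

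Concretely, first I would let $\pi$ be an RPTS obtained from antisymmetric random reweighting, which by Theorem \ref{thm:restorabletiebreaking} is simultaneously consistent, stable, and restorable. Second, I would define the candidate preserver
\[
H \;:=\; \bigcup_{s\in S}\ \bigcup_{v\in V}\ \bigcup_{F'\subseteq E,\,|F'|\le f}\ \pi(s,v\mid F'),
\]
that is, the overlay of all $S\times V$ replacement paths selected by $\pi$ under at most $f$ edge failures. Because $\pi$ is consistent and stable, Theorem \ref{thm:expftbfs} applies verbatim and gives
\[
|E(H)| \;=\; O\!\left(n^{2-1/2^f}\,|S|^{1/2^f}\right),
\]
which is precisely the size bound claimed in the theorem.

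The main content is then verifying that $H$ is $(f+1)$-FT for pairs in $S\times S$. Fix any $s,t\in S$ and any edge fault set $F\subseteq E$ with $|F|\le f+1$. If $F=\emptyset$, then $\pi(s,t)\subseteq H$ already by construction (taking $F'=\emptyset$ and $v=t$). If $F$ is nonempty, then restorability of $\pi$ provides a vertex $x$ and a proper subset $F'\subsetneq F$ such that the concatenation $\pi(s,x\mid F')\cup\pi(t,x\mid F')$ is a shortest $s\leadsto t$ path in $G\setminus F$. Since $|F'|\le |F|-1\le f$, and since both $s$ and $t$ lie in $S$, both paths $\pi(s,x\mid F')$ and $\pi(t,x\mid F')$ were included in the union defining $H$. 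Thus their union is a shortest $s\leadsto t$ replacement path avoiding $F$ that lives inside $H$, proving $\dist_{H\setminus F}(s,t)=\dist_{G\setminus F}(s,t)$.

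I do not anticipate a real obstacle: the heavy lifting has already been done in Theorems \ref{thm:expftbfs} and \ref{thm:restorabletiebreaking}. The only point that needs slight care is that restorability guarantees that the \emph{combined} path avoids all of $F$ (not just $F'$), so one must cite the restorability definition exactly as stated rather than argue edge-by-edge that each half avoids $F$. Modulo this, the argument is a clean reduction: restorability trades one fault for a concatenation of two paths from a collection that only needs to be $f$-FT, and Theorem \ref{thm:expftbfs} bounds the size of that collection.
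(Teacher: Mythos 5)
Your proposal is correct and matches the paper's own proof essentially step for step: overlay the $S\times V$ replacement paths of a consistent, stable, restorable RPTS under at most $f$ faults, cite Theorem \ref{thm:expftbfs} for the size bound, and use restorability (with $|F'|\le f$ and both endpoints $s,t\in S$) to cover an $(f+1)$-st fault. The only difference is your explicit handling of the $F=\emptyset$ case, which the paper leaves implicit.
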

\begin{proof}
Using Theorem \ref{thm:restorabletiebreaking}, let $\pi$ be an $(f+1)$-RPTS that is simultaneously stable, consistent, and $(f+1)$-restorable.
The construction is to build an $f$-FT $S \times V$ preserver by overlaying all $S \times V$ replacement paths selected by $\pi$ with respect to $\le f$ edge faults.
By Theorem \ref{thm:expftbfs}, this has the claimed number of edges.

To prove correctness of the construction, we invoke restorability.
Consider some vertices $s, t \in S$ and set of $|F| \le f+1$ edge faults.
Since $\pi$ is $(f+1)$-restorable, there is a valid $s \leadsto t$ replacement path avoiding $F$ that is the concatenation of two shortest paths of the form $\pi(s, x \mid F'), \pi(t, x \mid F')$, where $|F'| \le f$ and $x \in V$.
These replacement paths are added as part of the $f$-FT preservers, and hence the union of the preservers includes a valid $s \leadsto t$ replacement path avoiding $F$.
\end{proof}

We can then plug these subset distance preservers into a standard application to additive spanners.
We will black-box the relationship between subset preservers and additive spanners, so that it may be applied again when we give distributed constructions below.
The following lemma is standard in the literature on spanners.

\begin{lemma} \label{lem:prestospan}
Suppose that we can construct an $f$-FT $S \times S$ distance preserver on $g(n, \sigma, f)$ edges, for any set of $|S|=\sigma$ vertices in an $n$-vertex graph $G$.
Then $G$ has an $f$-FT $+4$ additive spanner on $O(g(n, \sigma, f) + nf + n^2 f/\sigma)$ edges.
\end{lemma}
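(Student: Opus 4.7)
The plan is a standard heavy/light vertex decomposition combined with the preserver supplied by the hypothesis. First I would fix a degree threshold $d = \Theta((f+1)n/\sigma)$ and call a vertex \emph{light} if its degree in $G$ is below $d$ and \emph{heavy} otherwise. The next step is to pick a vertex set $S$ with $|S| = \sigma$ that $(f+1)$-dominates every heavy vertex, meaning $|N(v) \cap S| \ge f+1$ for every heavy $v$. This can be produced either by uniform random sampling (where the expected number of $S$-neighbors of every heavy $v$ is $\Theta(f+1)$, concentrated by Chernoff at the cost of a constant blow-up of $d$) or by a greedy hitting-set argument. I expect this to be the main obstacle in keeping the bound exactly as stated, because naive analyses introduce a $\log n$ factor which has to be absorbed into the $O(\cdot)$ notation by a constant-factor adjustment of $d$.

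Given $S$, the spanner $H$ is the union of three edge sets: (i) all edges of $G$ incident to a light vertex, contributing $O(nd) = O(n^2 f/\sigma)$ edges; (ii) for every heavy vertex $v$, some choice of $f+1$ edges from $v$ to its $S$-neighbors, contributing $O(nf)$ edges; and (iii) an $f$-FT $S \times S$ distance preserver of $G$ provided by the hypothesis, contributing $g(n, \sigma, f)$ edges. Summing, $|E(H)| = O(g(n, \sigma, f) + n^2 f/\sigma)$.

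For correctness, fix $s, t \in V$ and $F \subseteq E$ with $|F| \le f$, and let $P$ be a shortest $s \leadsto t$ path in $G \setminus F$. If no vertex on $P$ is heavy, every edge of $P$ has a light endpoint and thus lies in $H$, so $P \subseteq H \setminus F$. Otherwise let $x$ (resp.\ $y$) be the first (resp.\ last) heavy vertex of $P$; by the choice of $x$ and $y$, every edge of the prefix $s \leadsto x$ and of the suffix $y \leadsto t$ has a light endpoint and therefore lies in $H \setminus F$. Since $x$ has at least $f+1$ edges to $S$ included in $H$ and $|F| \le f$, some edge $(x, s_x)$ with $s_x \in S$ survives in $H \setminus F$, and analogously pick a surviving $(y, s_y)$.

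Finally, invoke the $f$-FT $S \times S$ preserver at the pair $(s_x, s_y)$ and fault set $F$ to obtain an $s_x \leadsto s_y$ path in $H \setminus F$ of length $\dist_{G \setminus F}(s_x, s_y)$. Concatenating this with the prefix $s \leadsto x$, the edge $(x, s_x)$, the edge $(s_y, y)$, and the suffix $y \leadsto t$ yields an $s \leadsto t$ walk in $H \setminus F$ of length at most
\[
\dist_{G \setminus F}(s, x) + 1 + \dist_{G \setminus F}(s_x, s_y) + 1 + \dist_{G \setminus F}(y, t).
\]
Using the triangle inequality $\dist_{G \setminus F}(s_x, s_y) \le 2 + \dist_{G \setminus F}(x, y)$ (routing through the surviving edges $(s_x, x)$ and $(y, s_y)$) together with the fact that $x, y$ lie on the shortest path $P$, the total is at most $\dist_{G \setminus F}(s, t) + 4$, as desired.
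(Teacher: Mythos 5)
Your stretch argument (first/last special vertex on the replacement path, a surviving edge into $S$ at each end, the $S \times S$ preserver in the middle, and two applications of the triangle inequality) is exactly the paper's argument and is correct. The gap is in the construction and edge count. You fix a degree threshold $d = \Theta((f+1)n/\sigma)$ and then demand a set $S$ of size $\sigma$ that $(f+1)$-dominates \emph{every} heavy vertex. With that threshold, the expected number of $S$-neighbors of a heavy vertex under uniform sampling is only $\Theta(f+1)$, a constant, so each heavy vertex fails to be dominated with constant probability and, with many heavy vertices, some will fail; Chernoff only rescues you if the expectation is $\Omega(f + \log n)$, i.e.\ $d = \Theta((f+\log n)n/\sigma)$, which inflates the light-vertex term to $O(n^2(f+\log n)/\sigma)$ and misses the stated bound for fixed $f$. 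Your claim that this can be absorbed "by a constant-factor adjustment of $d$" is therefore wrong, and the greedy hitting-set alternative does not save it either: there are graphs of minimum degree $d$ whose domination number is $\Theta(n \ln d / d)$ (Alon), so already for $f=0$ no set of size $\sigma$ can dominate all vertices of degree $\Theta(n/\sigma)$ in the worst case. As stated, your construction either fails correctness (a heavy vertex with fewer than $f+1$ neighbors in $S$ gets neither all its edges nor enough cluster edges) or fails the edge bound.

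The repair — and this is what the paper does — is to make the light/heavy classification depend on the sample itself rather than on a degree threshold: sample $C$ of size $\sigma$, call a vertex clustered if it has at least $f+1$ neighbors in $C$ (add $f+1$ such edges) and unclustered otherwise (add \emph{all} its incident edges). Then your prefix/suffix argument goes through verbatim with "clustered" in place of "heavy," and correctness is deterministic for every outcome of the sample; only the edge bound is probabilistic, with each vertex contributing $O(fn/\sigma)$ edges in expectation, which yields the claimed $O(g(n,\sigma,f) + n^2 f/\sigma)$ in expectation and hence existence (and a high-probability bound by repeating $O(\log n)$ times and keeping the sparsest spanner). So the missing idea is not in the stretch analysis but in decoupling the classification from a fixed degree threshold so that no global domination event needs to hold.
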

\begin{proof}
For simplicity, we will give a randomized construction where the error bound holds deterministically and the edge bound holds in expectation.
Naturally, one can repeat the construction $O(\log n)$ times and select the sparsest output spanner to boost the edge bound to a high-probability guarantee.

Let $C \subseteq V$ be a subset of $\sigma$ vertices, selected uniformly at random.
Call the vertices in $C$ \emph{cluster centers}.
For each vertex $v \in V$, if it has at least $f+1$ neighbors in $C$, then add an arbitrary $f+1$ edges connecting $v$ to vertices in $C$ to the spanner $H$, and we will say that $v$ is \emph{clustered}.
Otherwise, if $v$ has $\le f$ neighbors in $C$, then add all edges incident to $v$ to the spanner, and we will say that $v$ is \emph{unclustered}.
The second and final step in the construction is to add an $f$-FT subset distance preserver over the vertex set $C$ to the spanner $H$.

\paragraph{Spanner Size.}
By hypothesis, the subset distance preserver in the construction costs $g(n, \sigma, f)$ edges, which gives the first term in our claimed edge bound.
Consider an arbitrary node $v$, and let us count the expected number of edges added due to $v$ in the first step of the construction.
We add $O(f)$ edges per \emph{clustered} nodes, for $O(nf)$ edges in total, which gives the second term in our edge bound.

We then analyze the edges added due to unclustered nodes; this part is a little more complicated, but still standard in the area (e.g., \cite{BKMP10}).
Let $v$ be an arbitrary node, let $N(v)$ be its neighborhood, and suppose $\deg(v) \ge 2f$.
Our goal is to bound the expected number of edges contributed by $v$ being unclustered, which is the quantity
$$\Pr\left[ |\{c \in N(v) \ \mid \ c \text{ cluster center}\}| \le f\right] \cdot \deg(v).$$
To compute this probability, we can imagine $\sigma$ sequential experiments, in which the next cluster center is randomly selected among the previously un-selected nodes.
The experiment is ``successful'' if the cluster center is one of the $\Theta(\deg(v))$ unselected nodes in $N(v)$, or ``unsuccessful'' if it is one of the $\Theta(n)$ other unselected nodes.
Thus each experiment succeeds with probability $\Theta(\deg(v)/n)$.
By standard Chernoff bounds, the probability that $\le f$ experiments succeed is
$$O\left(\frac{f}{\deg(v)} \cdot \frac{n}{\sigma}\right).$$
Thus the expected number of edges added due to $v$ unclustered is $O(nf/\sigma)$, and by unioning over the $n$ nodes in the graph, the total is $O(n^2 f/\sigma)$.

\paragraph{Spanner Correctness.}

Now we prove that $H$ is an $f$-EFT $+4$ spanner of $G$ (deterministically).
Consider any two vertices $s, t$ and a set of $|F| \le f$ edge faults, and let $q = q(s, t \mid F)$ be any replacement path between them.
Let $x$ be the first clustered vertex and $y$ the last clustered vertex in $q$.
Let $c_x, c_y$ be cluster centers adjacent to $x, y$, respectively, in the graph $G \setminus F$ (since $x, y$ are each adjacent to $f+1$ cluster centers initially, at least one adjacency still holds after $F$ is removed).
We then have:
\begin{align*}
\dist_{H \setminus F}(s, t) &\le \dist_{H \setminus F}(s, x) + \dist_{H \setminus F}(x, y) + \dist_{H \setminus F}(y, t)\\
&= \dist_{G \setminus F}(s, x) + \dist_{H \setminus F}(x, y) + \dist_{G \setminus F}(y, t) \tag*{all unclustered edges in $H$}\\
&\le \dist_{G \setminus F}(s, x) + \left(2 + \dist_{H \setminus F}(c_x, c_y) \right) + \dist_{G \setminus F}(y, t) \tag*{triangle inequality}\\
&= \dist_{G \setminus F}(s, x) + \left(2 + \dist_{G \setminus F}(c_x, c_y) \right) + \dist_{G \setminus F}(y, t) \tag*{$C \times C$ preserver}\\
&\le \dist_{G \setminus F}(s, x) + \left(4 + \dist_{G \setminus F}(x, y) \right) + \dist_{G \setminus F}(y, t) \tag*{triangle inequality}\\
&= \dist_{G \setminus F}(s, t) + 4.
\end{align*}
where the last equality follows since $x, y$ lie on a valid $s \leadsto t$ replacement path.
\end{proof}

Using this, we get:

\begin{theorem}\label{thm:additive}
For any $n$-vertex graph $G = (V, E)$ and nonnegative integer $f$, there is an $(f+1)$-FT $+4$ additive spanner on $O_f\left(n^{1 + 2^f / (2^f + 1)}\right)$ edges.
\end{theorem}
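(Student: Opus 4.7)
The plan is to combine Theorem \ref{thm:oneftss} with Lemma \ref{lem:prestospan} in a black-box fashion, and then balance parameters. The key observation is that Theorem \ref{thm:oneftss} (applied with parameter $f$) produces an $(f+1)$-FT $S \times S$ distance preserver of size
\[
g(n,\sigma,f+1) \;=\; O\bigl(n^{2 - 1/2^f}\,\sigma^{1/2^f}\bigr),
\]
which is exactly the input Lemma \ref{lem:prestospan} needs (instantiated with fault parameter $f+1$) to build an $(f+1)$-FT $+4$ additive spanner.

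First I would invoke Lemma \ref{lem:prestospan} with the above $g$ and fault parameter $f+1$, yielding an $(f+1)$-FT $+4$ additive spanner on
\[
O\!\left(n^{2 - 1/2^f}\,\sigma^{1/2^f} \;+\; \frac{(f+1)\,n^2}{\sigma}\right)
\]
edges, for any vertex-subset size $\sigma$ of our choice. Treating $f$ as fixed, I would then balance the two terms by setting $n^{2 - 1/2^f}\sigma^{1/2^f} = n^2/\sigma$, which rearranges to $\sigma^{1 + 1/2^f} = n^{1/2^f}$, i.e., $\sigma = n^{1/(2^f + 1)}$. Plugging this choice back in gives a spanner size of
\[
O_f\!\left(n^{2 - 1/(2^f + 1)}\right) \;=\; O_f\!\left(n^{1 + 2^f/(2^f + 1)}\right),
\]
where the equality is the algebraic identity $2 - \tfrac{1}{2^f+1} = 1 + \tfrac{2^f}{2^f+1}$, matching the claimed bound.

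There is essentially no obstacle here: Theorem \ref{thm:oneftss} already did the conceptual work of turning restorable tiebreaking into $S\times S$ preservers under $f+1$ faults, and Lemma \ref{lem:prestospan} is a standard subset-preserver-to-spanner reduction using a random clustering that keeps $f+1$ cluster-center neighbors per clustered vertex (so that at least one survives any fault set of size $f+1$). The only minor care needed is the bookkeeping of fault parameters—making sure that Lemma \ref{lem:prestospan} is applied with fault parameter $f+1$ (not $f$), so that the $(f+1)$-FT preserver from Theorem \ref{thm:oneftss} plugs in directly and the $n^2(f+1)/\sigma$ term absorbs into the $O_f(\cdot)$ notation. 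After that, the balancing of $\sigma$ is immediate and gives the stated exponent.
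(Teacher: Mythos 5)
Your proposal is correct and takes essentially the same approach as the paper, which simply states that the result follows immediately by applying Lemma \ref{lem:prestospan} to the preservers of Theorem \ref{thm:oneftss}. You have just made explicit the parameter balancing $\sigma = n^{1/(2^f+1)}$ and the bookkeeping of fault parameters that the paper leaves implicit.
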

\begin{proof}
The construction is to simply applying Lemma \ref{lem:prestospan} to the subset distance preservers from Theorem \ref{thm:oneftss}, balancing parameters by choosing $\sigma := n^{1/(2^f + 1)}$.
From Theorem \ref{thm:oneftss}, the size of the subset distance preserver is
\begin{align*}
O_f\left(n^{2 - 1/2^f} \cdot \left(n^{1/(2^f + 1)}\right)^{1/2^f}\right) &= O_f\left(n^{2 - 1/2^f + 1/(2^f \cdot (2^f + 1))}\right)\\
&= O_f\left(n^{2 - 1/(2^f + 1)}\right)\\
&= O_f\left(n^{1 + 2^f / (2^f + 1)}\right).
\end{align*}
The $O(nf)$ term in Lemma \ref{lem:prestospan} can be ignored, and the $O(n^2 f/\sigma)$ term is again
\begin{align*}
O_f\left(n^{2 - 1/(2^f+1)}\right) = O_f\left(n^{1 + 2^f / (2^f + 1)}\right),
\end{align*}
and the theorem follows.
\end{proof}

\subsection{Distributed Constructions}\label{sec:dist}
Throughout this section, we consider the standard $\mathsf{CONGEST}$  model of distributed computing \cite{Peleg:2000}. In this model, the network is abstracted as an $n$-vertex graph $G=(V, E)$, with one processor on each vertex. Initially, these processors only know their incident edges in the graph, and the algorithm proceeds in synchronous communication rounds over the graph $G=(V,E)$. In each round, vertices are allowed to exchange $O(\log n)$ bits with their neighbors and perform local computation. Throughout, the diameter of the graph $G=(V,E)$ is denoted by $D$. 

\begin{lemma} [Distributed Tie-Breaking SPT] \label{lem:SPT-tieBreaking}
For every unweighted and undirected $n$-vertex graph $G=(V,E)$, tiebreaking weight function $\omega: E \to [1-\epsilon,1+\epsilon]$, and every source vertex $s$, there is a deterministic algorithm that computes a shortest-path tree rooted at $s$ (based on the weights of $\omega$) within $O(D)$ rounds. The total number of messages sent through each edge is bounded $O(1)$.  
\end{lemma}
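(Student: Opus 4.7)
The plan is to run a layered, weighted variant of BFS from $s$. The first step is to observe that, by the definition of a tiebreaking weight function (every $\omega$-shortest path is also a hop-shortest path in $G$), the number of edges on any $\omega$-shortest $s \leadsto v$ path equals the unweighted hop-distance $d_G(s,v)$. Hence the BFS layers of $G$ from $s$ exactly classify the vertices by the hop-length of their $\omega$-shortest path from $s$, and in particular every $\omega$-shortest path to a hop-$d$ vertex enters via a hop-$(d{-}1)$ neighbor.

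With this in hand, I would describe the following synchronous algorithm. Each vertex $v$ maintains a tentative value $d_\omega(v)$, initially $\infty$ except for $d_\omega(s)=0$. In round $r$, every vertex that finalized in round $r-1$ broadcasts the single number $d_\omega(v)$ to each of its neighbors; in particular $s$ broadcasts in round $1$. A vertex $v$ that has not yet finalized and receives at least one message in round $r$ finalizes in that round by setting
\[ d_\omega(v) \;=\; \min_{u}\bigl(d_\omega(u) + \omega(u,v)\bigr), \]
where the minimum is over the messages received in round $r$, and it stores the argmin as its parent in the tree. I would verify correctness by induction on hop-distance $d$: any vertex at hop-distance $d$ has all its neighbors at hop-distance $\ge d-1$, so round $d$ is the first round in which it receives any message, and the messages it receives in that round come exactly from its hop-$(d{-}1)$ neighbors (since only they have finalized by round $d-1$). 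By the opening observation, the unique $\omega$-shortest $s\leadsto v$ path enters $v$ from one of these neighbors, so the local minimum is correct, and the parent pointers assemble into the unique $\omega$-shortest-path tree.

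For the round and message bounds, each layer finalizes in one round, so the algorithm terminates in $D+1 = O(D)$ rounds. Each vertex transmits on each of its incident edges at most once (the round immediately after finalizing), so every edge carries at most two messages overall, one from each endpoint. The main subtlety I expect to handle is bit precision: the transmitted value $d_\omega(v)$ is a real number in $[0, n(1+\epsilon)]$, but since we may choose $\omega$ with $O(\log n)$ bits of precision (as already exploited in the paper's isolation-lemma remark), each such value fits in a single $\mathsf{CONGEST}$ message of $O(\log n)$ bits. Beyond this bookkeeping, I do not anticipate any serious obstacle.
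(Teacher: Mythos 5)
Your proposal is correct and follows essentially the same route as the paper: a layer-synchronous BFS from $s$ in which each newly finalized vertex broadcasts its weighted distance $\dist^*(s,\cdot)$ once, and each vertex in the next layer picks as parent the neighbor minimizing $\dist^*(s,u)+\omega(u,v)$, with correctness resting on the observation that $\omega$-shortest paths are also hop-shortest so they respect the BFS layering. Your explicit treatment of the $O(\log n)$-bit precision of the transmitted distances is a reasonable bookkeeping addition not spelled out in the paper, but it does not change the argument.
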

\begin{proof}
Let $\dist(u,v)$ denote the unweighted $u$-$v$ distance in $G$, and $\dist^*(u,v)$ denote the (directed) $u \leadsto v$ distance under the weights of $\omega$. Since $\omega$ is only a tie-breaking weight function, any shortest path tree of $s$ under $\omega$ is also a legit BFS tree for $s$. In other words, all vertices at unweighted distance $d$ from the source $s$ must appear on level $d$ in the SPT of $s$ under $\omega$. The construction therefore is almost analogous to the standard distributed BFS construction, and the only difference is that each vertex uses the weights of $\omega$ in order to pick its parent in the tree.

The algorithm works in $O(D)$ steps, each step is implemented within $O(1)$ rounds. The invariant at the beginning of phase $i \in \{1,\ldots, D\}$ is as follows: all vertices in the first $i$ layers $L_0,\ldots, L_{i-1}$ of the SPT are marked, and each of these vertices $v$ know their distance $\dist^*(s,v)$. This clearly holds for $i=1$ as $L_0=\{s\}$ and $\dist^*(s,s)=0$. In phase $i$, all vertices $u$ of layer $L_i$ broadcast the distance $\dist^*(s,u)$ to their neighbors. Every vertex $v \notin \bigcup_{j=0}^{i-1} L_j$ that receives messages from its neighbors in layer $L_i$ picks its parent $u$ to be the vertex that minimizes its $\dist^*(s,v)$ distance. That is, $u=\arg\min_{w \in L_i \cap N(v)}\dist^*(s,w)+\omega(w,v)$. It is easy to see by induction on $i$, that the invariant is now satisfied at the beginning of phase $i+1$. After $D$ rounds, the construction of the tree is completed. As in the standard BFS construction, only $O(1)$ number of messages are sent through each edge in the graph.
\end{proof}

We make an extensive use of the random delay approach of \cite{leighton1994packet,Ghaffari15}. Specifically, we use the following theorem:
\begin{theorem}[{\cite[Theorem 1.3]{Ghaffari15}}]\label{thm:delay}
Let $G$ be a graph and let $A_1,\ldots,A_m$ be $m$ distributed algorithms in 
the $\mathsf{CONGEST}$ model,  where each algorithm takes at most $\dilation$ rounds, and where for each 
edge of $G$, at most $\congestion$ messages need to go through it, in total 
over all these algorithms. Then, there is a randomized distributed
algorithm (using only private randomness) that, with high probability, 
produces 
a schedule that runs all the algorithms in $O(\congestion +\dilation \cdot 
\log 
n)$ rounds, after $O(\dilation \log^2 n)$ rounds of pre-computation.
\end{theorem}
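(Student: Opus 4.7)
The plan is to use the random-delay technique of Leighton, Maggs, and Rao, adapted to the distributed $\mathsf{CONGEST}$ setting with $m$ concurrent algorithms. Concretely, I would independently sample a delay $d_i$ for each algorithm $A_i$, uniformly from $\{1, \ldots, \lceil \congestion / \log n \rceil\}$, and then run $A_i$ starting at global virtual time $d_i$, ending by time $d_i + \dilation$. The number of virtual rounds is thus $O(\congestion / \log n + \dilation)$.

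The key analytical step controls, for each edge $e$ and virtual time slot $t$, the load $L(e,t)$ defined as the number of algorithms that wish to send a message across $e$ at virtual time $t$. Each algorithm uses $e$ a constant number of times per round and contributes at most one message to each fixed $(e,t)$ pair. Since the at most $\congestion$ total messages through $e$ are spread uniformly at random over $\Theta(\congestion / \log n)$ possible delay choices, the expected load at any fixed $(e,t)$ is $O(\log n)$. A Chernoff bound followed by a union bound over all $\poly(n)$ relevant edge-time pairs yields $\max_{e,t} L(e,t) = O(\log n)$ with high probability, after which serializing the contending messages inflates each virtual round into $O(\log n)$ actual rounds, giving the promised $O(\congestion + \dilation \log n)$ schedule length.

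For the $O(\dilation \log^2 n)$ preprocessing bound, I would proceed in two phases. First, each algorithm samples its random delay locally, requiring no communication. Second, each vertex must learn the delays of every algorithm whose messages pass through its incident edges, so that during the scheduled execution it knows when to forward which message. The naive dissemination takes $O(\dilation)$ rounds per algorithm for a prohibitive $O(m \dilation)$ total, but one can recurse: the broadcast subroutines themselves form a smaller scheduling instance whose dilation is $O(\dilation)$ and whose per-edge congestion is a $\log$-factor smaller, so applying the scheme once more (or invoking a Moser--Tardos-style distributed Lov\'asz Local Lemma on the load events) handles them in $O(\dilation \log^2 n)$ rounds.

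The main obstacle is the preprocessing rather than the scheduling itself: the Chernoff calculation controlling $L(e,t)$ uses independence of delays across algorithms and is routine, but efficiently communicating the sampled delays to the right vertices requires either the recursive construction sketched above or a careful distributed LLL procedure whose dependency graph is defined in terms of shared algorithms rather than shared edge-time slots. Once the delays are known locally, the execution is entirely local: at each actual round every vertex knows, from its stored delay information, which pending messages to forward in which edge-time serialization order.
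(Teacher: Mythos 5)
The paper does not actually prove this statement: Theorem \ref{thm:delay} is quoted verbatim from Ghaffari \cite{Ghaffari15} (Theorem 1.3 there) and used purely as a black box, so there is no internal proof to compare yours against; the comparison below is with the known proof of that result.

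Your core scheduling analysis is the standard Leighton--Maggs--Rao random-delay argument, and it does deliver the stated schedule length: with delays uniform in $\{1,\dots,\lceil \congestion/\log n\rceil\}$, each algorithm contributes at most one message to any fixed (edge, virtual-round) pair, the expected load per pair is $O(\log n)$, Chernoff plus a union bound (fine when $\congestion$ and $\dilation$ are $\poly(n)$) bounds the maximum load by $O(\log n)$ w.h.p., and simulating each virtual round in $O(\log n)$ real rounds gives $O(\congestion+\dilation\log n)$. The genuine gap is in the part that makes the cited theorem more than folklore: achieving this with only \emph{private} randomness after $O(\dilation\log^2 n)$ rounds of pre-computation. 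Your dissemination step does not work as described. The claim that the broadcasts of the sampled delays form a scheduling instance whose per-edge congestion is ``a $\log$-factor smaller'' is unjustified: an edge can be used by up to $\congestion$ distinct algorithms, so informing every participant of every algorithm of its delay is itself an instance with congestion up to $\congestion$ and dilation up to $\dilation$ --- the same parameters you started with --- so the recursion does not shrink, and scheduling it naively could cost $\Omega(\congestion)$ rounds, which the pre-computation budget $O(\dilation\log^2 n)$ (note: it has no $\congestion$ term) cannot absorb. The alternative you mention, a Moser--Tardos-style distributed Lov\'asz Local Lemma, is only named, not developed. In short, the schedule-length analysis is fine, but the private-randomness/pre-computation component --- precisely the added content of Ghaffari's theorem --- is missing, so as a proof of the full statement the proposal is incomplete.
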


Using the random delay approach for computing SPT with respect to our $1$-restorable tie-breaking scheme, provides an efficient distributed construction of $1$-FT $S \times S$ preserves, and consequently also $1$-FT $+4$-additive spanners.  
\begin{lemma}[Dist. $1$-FT $S \times S$ Preserver]\label{lemma:onefault-SxS}
For every unweighted and undirected $n$-vertex graph $G=(V,E)$ and subset of sources $S$, there is a randomized algorithm that computes a $1$-FT $S \times S$ preserver with $O(|S|n)$ edges within $\widetilde{O}(D+|S|)$ rounds, with high probability.
\end{lemma}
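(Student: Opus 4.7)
The plan is to compute, in a distributed fashion, the restorable tie-breaking scheme $\pi$ of Theorem~\ref{thm:restorabletiebreaking}, then realize the $f=0$ case of Theorem~\ref{thm:oneftss} by taking, as the preserver $H$, the union of the shortest path trees rooted at each source $s \in S$ under $\pi$. Since $n^{2-1/2^0}|S|^{1/2^0} = n|S|$, each such tree has $O(n)$ edges and their union has $O(|S|n)$ edges, matching the claimed size bound. Correctness follows from restorability: for any $s,t \in S$ and any single failing edge $e$, the only proper subset of $F=\{e\}$ is $F' = \emptyset$, so Theorem~\ref{thm:restorabletiebreaking} guarantees a vertex $x$ with $\pi(s,x) \cup \pi(t,x)$ avoiding $e$, and both non-faulty paths lie in the two SPTs $T_s, T_t$ contained in $H$.

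For the distributed implementation, I would first realize the antisymmetric random reweighting in $O(1)$ rounds: for each edge $\{u,v\}$, the endpoint with the smaller identifier draws a random value $r(u,v) \in [-\epsilon,\epsilon]$ from $O(\log n)$-bit precision (enough by the isolation-lemma remark in the footnote to Theorem~\ref{thm:restorabletiebreaking}) and transmits it across the edge, and both endpoints locally store the pair of antisymmetric weights. This produces a tie-breaking weight function $\omega$ whose unique shortest paths are exactly the paths selected by $\pi$. Next, I would invoke Lemma~\ref{lem:SPT-tieBreaking} from each source $s \in S$ to build an SPT $T_s$ rooted at $s$ under $\omega$. Each individual call has dilation $O(D)$ and sends only $O(1)$ messages along each edge; running all $|S|$ calls would naively yield a total congestion of $O(|S|)$ per edge.

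To avoid running these calls sequentially, I would schedule them with the Ghaffari random-delay routine of Theorem~\ref{thm:delay}, applied with dilation $\dilation = O(D)$ and congestion $\congestion = O(|S|)$. This yields a concurrent execution of all $|S|$ SPT computations in $O(\congestion + \dilation \log n) = \widetilde{O}(D + |S|)$ rounds with high probability, after an $O(D \log^2 n) = \widetilde{O}(D)$-round preprocessing phase. Finally, each vertex locally marks the edges of each $T_s$ incident to it that it should include in $H$; no further communication is required to declare the union.

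The only genuinely subtle point is lining up the preconditions of Theorem~\ref{thm:delay} with Lemma~\ref{lem:SPT-tieBreaking}: I need that the per-edge congestion summed across all $|S|$ SPT algorithms really is $O(|S|)$, which holds because Lemma~\ref{lem:SPT-tieBreaking} already guarantees $O(1)$ messages per edge per SPT. Beyond this bookkeeping, the main content of the proof is really packaging a known fact (union of $|S|$ SPTs under a restorable RPTS is a $1$-FT $S \times S$ preserver) together with two off-the-shelf distributed subroutines, so no deeper obstacle is expected.
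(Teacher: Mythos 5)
Your proposal matches the paper's proof in both structure and substance: locally realize the antisymmetric reweighting in $O(1)$ rounds, build an SPT from each source via Lemma~\ref{lem:SPT-tieBreaking}, schedule the $|S|$ SPT computations concurrently with the random-delay framework of Theorem~\ref{thm:delay} (dilation $O(D)$, congestion $O(|S|)$), and take the union. You are in fact slightly more explicit than the paper about why the union is a $1$-FT $S\times S$ preserver (restorability with $F'=\emptyset$) and about the $O(|S|n)$ edge bound, but these are the same justifications the paper relies on implicitly via Theorems~\ref{thm:restorabletiebreaking} and~\ref{thm:oneftss}.
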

\begin{proof}
First, the vertices locally compute the $1$-restorable tie-breaking weight function $\omega$, by letting each vertex $u$ sample the weights for its incident edges, and sending it to the second edge endpoints. This is done in a single communication round. Then we apply the SPT construction of Lemma \ref{lem:SPT-tieBreaking} under $\omega$, for every source $s \in S$. We run all these algorithms, $A_{s_1},\ldots, A_{s_\sigma}$, simultaneously in parallel using the random delay approach. By sharing a shared seed of $O(\log^2 n)$ bits, each vertex can compute the starting time of each algorithm $A_{s_i}$. As the total congestion of these algorithms is $O(|S|)$, by Thm. \ref{thm:delay}, the round complexity is bounded by $\widetilde{O}(D+|S|)$ rounds, w.h.p.
\end{proof}

By applying the constructions of $1$-FT $S \times V$ preservers and $2$-FT $S \times V$ preservers of \cite{ParterDualDist20} and using the $1$-restorable weight function to break the BFS ties, we immediately get 
$2$-FT $S \times S$ preservers and $3$-FT $S \times S$ preservers. Specifically, Theorem \ref{thm:dist-constructions} (2) (resp., (3)) follows by using the $1$-restorable weight function with Theorem 1 (resp., Theorem 2) of \cite{ParterDualDist20}. E.g., to compute the $2$-FT $S \times S$ preserver, we apply the construction of $1$-FT $S \times V$ preservers of Theorem 1 in \cite{ParterDualDist20} with the only difference being that the shortest path ties are decided based on the $1$-restorable weight function instead of breaking it arbitrarily. This can be easily done by augmenting the BFS tokens with the length of the path from the root.

Finally, the construction of FT $S \times S$ preservers can naturally yield FT $+4$-additive spanners. No prior constructions of such spanners have been known before (not even for $f=1$). 
%

\begin{proof}[Proof of Cor. \ref{cor:dist-add-constructions}]
%
Let $S$ be a sample of $\sigma=\Theta(\sqrt{n}\log n)$ sources sampled independently in $V$. 
Item (1) of the corollary follows by applying the construction of Theorem \ref{thm:dist-constructions}(1) and the correctness follows by Lemma \ref{lem:prestospan}.  In the same manner, item (2) of the corollary follows by applying the above construction with $\sigma=n^{1/3}$ sources, and using Theorem \ref{thm:dist-constructions}(2). Finally, item (3) of the corollary follows by applying Theorem \ref{thm:dist-constructions} (3) with $\sigma=n^{1/9}$. 
\end{proof}

\FloatBarrier
\bibliographystyle{plain}
\bibliography{refs}

\appendix
\section{Impossibility of Symmetry and Restorability}

As observed by Afek et al.~\cite{ABKCM02}, one cannot generally have restorability and symmetry at the same time:

\begin{theorem}\label{thm:impossible}
There are input graphs that do not admit a tiebreaking scheme that is simultaneously symmetric and $1$-restorable.
\end{theorem}
\begin{proof}
The simplest example is a $C_4$:
\begin{center}
\begin{tikzpicture}

\draw [fill=black] (3, 0) circle [radius=0.15];
\draw [fill=black] (4, 0) circle [radius=0.15];
\node at (3.5, 0) {\Huge $\mathbf \times$};
\node at (3, -0.5) {$s$};
\node at (4, -0.5) {$t$};

\draw [fill=black] (3, 1) circle [radius=0.15];
\draw [fill=black] (4, 1) circle [radius=0.15];
\node at (3, 1.5) {$x$};
\node at (4, 1.5) {$y$};


\draw (3, 0) -- (3, 1) -- (4, 1) -- (4, 0) -- cycle;

\end{tikzpicture}
\end{center}
Assume $\pi$ is symmetric and consider the selected non-faulty shortest paths $\pi(s, y)$ and $\pi(x, t)$ going between the two opposite corners.
These paths must intersect on an edge; without loss of generality, suppose this edge is $(s, t)$.
Then $\pi(s, t)$ is just the single edge $(s, t)$.
Suppose this edge fails, and so the unique replacement $s \leadsto t$ path is $q = (s, x, y, t)$.
Since both $\pi(s, y)$ and $\pi(x, t)$ use the edge $(s, t)$, this path does not decompose into two non-faulty shortest paths selected by $\pi$.
Hence $\pi$ is not $1$-restorable.
\end{proof}

\section{Lower bound for $f$-failures preservers with a consistent and stable tie-breaking scheme \label{app:conslb}}
In this section we prove Theorem \ref{thm:cslowerbound} by giving a lower bound constructions for $S \times V$ distance preservers using a consistent and stable tie-breaking scheme. 


We begin by showing the construction for the single source case (i.e., $\sigma=1$) and then extend it to the case of multiple sources.
Our construction is based on the graph $G_f(d)=(V_f,E_f)$, defined inductively. 
For $f=1$, $G_1(d)=(V_1, E_1)$ consists of three components:
\begin{enumerate}[noitemsep]
\item a set of vertices $U=\{u^1_1,\ldots,u^1_d\}$ connected by a path
$P_1=[u^1_1, \ldots, u^1_d]$,
\item a set of terminal vertices $Z=\{z_1,\ldots,z_d\}$
(viewed by convention as ordered from left to right),
\item a collection of $d$ vertex disjoint paths $\{Q^1_{i}\}$, 
where each path $Q^1_{i}$ connects $u^1_i$ and $z_i$ and has length of $d-i+1$ edges, for every $i \in \{1, \ldots, d\}$. 
\end{enumerate}
The vertex $\Root(G_1(d))=u^1_1$ is fixed as the root of $G_1(d)$, hence
the edges of the paths $Q^1_i$ are viewed as directed away from $u^1_i$,
and the terminal vertices of $Z$ are viewed as the \emph{leaves} of the graph,
denoted $\Leaf(G_1(d))=Z$.
Overall, the vertex and edge sets of $G_1(d)$ are
$V_1=U \cup Z \cup \bigcup_{i=1}^d V(Q^1_i)$ and
$E_1=E(P_1) \cup \bigcup_{i=1}^d E(Q^1_i)$.

For ease of future analysis, we assign labels to the leaves
$z_i \in \Leaf(G_1(d))$.
Let $\LAB_f: \Leaf(G_f(d))  \to E(G_f(d))^f$.
The label of each leaf corresponds to a set of edge faults under which
the path from root to leaf is still maintained (as will be proved later on).
Specifically, $\LAB_1(z_i, G_1(d))=(u^1_i,u^1_{i+1})$ for $i \in [1,d-1]$.
In addition, define
$P(z_i,G_1(d)) = P_1[\Root(G_1(d)),u^1_i] \circ Q^1_i$
to be the path from the root $u^1_1$ to the leaf $z_i$.

To complete the inductive construction, let us describe the construction
of the graph $G_{f}(d)=(V_{f}, E_{f})$, for $f\ge 2$,
given the graph $G_{f-1}(\sqrt{d})=(V_{f-1}, E_{f-1})$.
The graph $G_{f}(d)=(V_{f}, E_{f})$ consists of the following components.
First, it contains a path $P_f=[u^f_1, \ldots, u^f_d]$, where
the vertex $\Root(G_{f}(d))=u^f_1$ is fixed to be the root.
In addition, it contains $d$ disjoint copies of the graph $G'=G_{f-1}(\sqrt{d})$,
denoted by $G'_1, \ldots, G'_d$
(viewed by convention as ordered from left to right),
where each $G'_i$ is connected to $u^f_i$ by a collection of $d$
vertex disjoint paths $Q^f_i$, for $i \in \{1, \ldots, d\}$,
connecting the vertices $u^f_i$ with $\Root(G'_i)$.
The length of $Q^f_i$ is  $d-i+1$, and 
the leaf set of the graph $G_{f}(d)$ is the union of the leaf sets of $G'_j$'s,
$\Leaf(G_{f}(d))=\bigcup_{j=1}^d \Leaf(G'_j)$.

Next, define the labels $\LAB_f(z)$ for each $z \in \Leaf(G_{f}(d))$.
For every $j \in \{1, \ldots, d\}$ and any leaf $z_{j,i} \in \Leaf(G'_j)$,
let $\LAB_f(z_{j,i}, G_{f}(d))=(u^f_j,u^f_{j+1}) \circ \LAB_{f-1}(z_{j,i}, G'_j)$.

Denote the size (number of vertices) of $G_f(d)$ by $\NodesIn(f,d)$,
its depth (maximum distance between the root vertex $\Root(G_f(d))$ to a leaf vertex in $\Leaf(G_f(d))$) by  $\depth(f,d)$, and its number of leaves by  $\NLeaf(f,d) = |\Leaf(G_f(d))|$.
Note that for $f=1$,
$\NodesIn(1,d) = 2d+d^2 \leq 2d^2$,
$\depth(1,d)=d$ and $\NLeaf(1,d)=d$.
We now observe that the following inductive relations hold.
\begin{observation}
\label{obs:rel}
(a) $\depth(f,d)=O(d)$, (b) $\NLeaf(f,d)=d^{2-1/2^{f-1}}$ and (c) $\NodesIn(f,d)=2f\cdot  d^2$.
\end{observation}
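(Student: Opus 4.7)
The plan is to prove all three parts simultaneously by induction on $f$, leveraging the recursive structure of $G_f(d)$: it is a single path $P_f$ of length $d-1$ together with $d$ connecting paths $Q_i^f$ of lengths $d-i+1$ and $d$ disjoint copies of the smaller graph $G_{f-1}(\sqrt{d})$.

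For the base case $f=1$ everything is immediate from the definition: $\depth(1,d)=d$, $\NLeaf(1,d)=d=d^{2-1/2^{0}}$, and $\NodesIn(1,d) = d + d + \sum_{i=1}^{d}(d-i) \le 2d^2 = 2\cdot 1 \cdot d^2$. (The first two terms count $|P_1|$ and $|Z|$, the last counts the internal vertices of the $Q^1_i$'s.)

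For the inductive step, I would unpack each quantity along the three components. For (a), any root-to-leaf path starts at $u^f_1$, travels at most $d-1$ steps along $P_f$ to reach some $u_i^f$, then traverses $Q_i^f$ of length at most $d$ to enter some copy $G'_i$, and finally descends to a leaf in at most $\depth(f-1,\sqrt{d})$ additional steps. By induction $\depth(f-1,\sqrt{d})=O(\sqrt{d})$, giving $\depth(f,d)\le 2d+O(\sqrt{d}) = O(d)$. (More carefully, unrolling the recursion yields $\depth(f,d) \le 2d + 2\sqrt{d}+2d^{1/4}+\cdots = O(d)$ as a geometric series with ratio $\le 1/2$ in the exponent.) For (b), the leaves partition as $\Leaf(G_f(d))=\bigsqcup_{j=1}^{d}\Leaf(G'_j)$, so
\[
\NLeaf(f,d) \;=\; d\cdot \NLeaf(f-1,\sqrt{d}) \;=\; d\cdot (\sqrt{d})^{\,2-1/2^{f-2}} \;=\; d^{\,1+(1-1/2^{f-1})} \;=\; d^{\,2-1/2^{f-1}},
\]
as claimed. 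For (c), the vertex set of $G_f(d)$ is the disjoint union of $V(P_f)$, the internal vertices of the $Q_i^f$'s, and the vertex sets of the $d$ copies of $G_{f-1}(\sqrt{d})$. These contribute $d$, at most $\sum_{i=1}^{d}(d-i) \le d^2/2$, and $d\cdot \NodesIn(f-1,\sqrt{d})$ vertices respectively. Applying the inductive hypothesis to the last term yields $d\cdot 2(f-1)\cdot(\sqrt{d})^2 = 2(f-1)d^2$, so in total
\[
\NodesIn(f,d) \;\le\; d + \tfrac{1}{2}d^2 + 2(f-1)d^2 \;\le\; 2fd^2.
\]

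I do not anticipate a real obstacle here; the argument is pure bookkeeping against the definition. The only places requiring minor care are (i) verifying in (a) that the nested $O(\cdot)$ absorbs the recursive contribution (which works because the depth recursion is $T(f,d) \le 2d + T(f-1,\sqrt{d})$, a convergent tower), and (ii) in (c) making sure that shared endpoints between $P_f$, the $Q_i^f$'s, and the copies $G'_i$ are not double-counted, which is handled by charging only the internal vertices of the attachment paths.
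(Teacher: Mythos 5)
Your proof is correct and follows essentially the same inductive bookkeeping as the paper's own argument. The only minor differences are that the paper exploits the fact that the root-to-copy distance $u^f_1 \leadsto \Root(G'_i)$ is \emph{exactly} $d$ (since the $(i-1)$ steps along $P_f$ and the $d-i+1$ steps along $Q^f_i$ always sum to $d$), yielding the cleaner recursion $\depth(f,d) = d + \depth(f-1,\sqrt{d}) \le 2d$, whereas you use the slightly cruder $\le 2d$ per level; both give $O(d)$.
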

\begin{proof}
(a) follows by the length of $Q^f_i$, which implies that
$\depth(f,d)=d+\depth(f-1,\sqrt{d})\leq 2d$.
(b) follows by the fact that the terminals of the paths starting with
$u_1^f, \ldots, u_d^f$ are the terminals of the graphs $G'_1, \ldots, G'_d$
which are disjoint copies of $G_{f-1}(\sqrt{d})$, so $\NLeaf(f,d)=d \cdot \NLeaf(f-1,\sqrt{d})$.
(c) follows by summing the vertices in the $d$ copies of $G'_i$
(yielding $d \cdot \NodesIn(f,d)$) and the vertices in $d$ vertex disjoint paths,
namely $Q^f_1, \ldots, Q^f_d$ of total $d^2$ vertices,
yielding $\NodesIn(f,d)=d \cdot \NodesIn(f-1,\sqrt{d})+d^2\leq 2fd^2$.
\end{proof}
Consider the set of leaves in $G_f(d)$, namely,
$\Leaf(G_f(d)) = \bigcup_{i=1}^d \Leaf(G'_i) = \{z_1, \ldots, z_\lambda\}$,
ordered from left to right according to their appearance in $G_f(d)$.

For every leaf vertex $z \in \Leaf(G_f(d))$, we define inductively a path $P(z, G_f(d))$ connecting the root $\Root(G_{f}(d))=u^f_1$ with the leaf $z$. As described above for $f=1$, $P(z_i,G_1(d)) = P_1[\Root(G_1(d)),u^1_i] \circ Q^1_i$. Consider a leaf $z \in \Leaf(G_f(d))$ such that $z$ is the $i^{th}$ leaf in the graph $G'_j$. We therefore denote $z$ as $z_{i,j}$, and define $P(z_{j,i},G_f(d)) = P_f[\Root(G_f(d)),u^1_j] \circ Q^f_j \circ P(z_{j,i},G'_j)$. We next claim the following on these paths.

\begin{lemma}
\label{lem:prop_induc_path}
For every leaf $z_{j,i} \in \Leaf(G_f(d))$ it holds that: \\
(1) The path  $P(z_{j,i}, G_f(d))$ is the only $u^f_1-z_{j,i}$ path in $G_f(d)$.\\
(2) $P(z_{j,i}, G_f(d)) \subseteq G \setminus \bigcup_{i \geq j}\LAB_f(z_{j,i}, G_f(d)) \cup \bigcup_{k \geq j, \ell \in [1,\NLeaf(f-1,\sqrt{d})]}\LAB_f(z_{k,\ell}, G_f(d))$.\\
(3) $P(z_{j,i}, G_f(d)) \not\subseteq G \setminus \LAB_f(z_{k,\ell}, G_f(d))$
for $k<j$ and every $\ell \in [1,\NLeaf(f-1,\sqrt{d})]$, as well as for $k= j$ and every $\ell\in [1, i-1]$.
(4) $|P(z, G_f(d))| = |P(z', G_f(d))|$ for every $z,z' \in \Leaf(G_f(d))$. 
\end{lemma}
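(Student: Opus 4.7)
The plan is to prove all four claims simultaneously by induction on $f$, which lets us invoke each clause of the hypothesis on the subgraphs $G'_j \cong G_{f-1}(\sqrt{d})$ when handling $G_f(d)$.

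For the base case $f=1$, everything reduces to inspecting the comb $G_1(d)$: the graph is a tree, so uniqueness (1) is immediate; the path $P(z_i, G_1(d))$ has length $(i-1)+(d-i+1)=d$, giving (4); and since $\LAB_1(z_i) = (u^1_i, u^1_{i+1})$, the edge $(u^1_k, u^1_{k+1})$ lies on $P(z_i, G_1(d))$ iff $k < i$, which gives (2) and (3).

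For the inductive step, I would decompose $P(z_{j,i}, G_f(d)) = P_f[\Root, u^f_j] \circ Q^f_j \circ P(z_{j,i}, G'_j)$ and handle each clause in turn. Clause (1) follows because the only bridges between the backbone $P_f$ and a copy $G'_k$ are the disjoint paths $Q^f_k$, so any $\Root \leadsto z_{j,i}$ walk must traverse $P_f[\Root, u^f_j] \circ Q^f_j$ before entering $G'_j$, inside which inductive (1) pins down the rest. Clause (4) is a one-line computation: by induction each inner path has some common length $L_{f-1}$, so $|P(z_{j,i}, G_f(d))| = (j-1) + (d-j+1) + L_{f-1} = d + L_{f-1}$, independent of $(j,i)$. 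For (2), I would observe that for any leaf $z_{k,\ell}$ with $k > j$, the edge $(u^f_k, u^f_{k+1}) \in \LAB_f(z_{k,\ell})$ sits on $P_f$ strictly to the right of $u^f_j$, and the remaining $\LAB_{f-1}$-edges live inside $G'_k$ which is vertex-disjoint from $G'_j$; while for $k=j$, the edge $(u^f_j, u^f_{j+1})$ is not used by our path, and the inner contribution $\LAB_{f-1}(z_{j,\ell}, G'_j)$ for $\ell \geq i$ leaves $P(z_{j,i}, G'_j)$ intact by inductive (2). Clause (3) is the dual: for $k < j$ the backbone edge $(u^f_k, u^f_{k+1}) \in \LAB_f(z_{k,\ell})$ is a prefix edge of our path, so its removal already severs the path; for $k=j$ and $\ell < i$ the backbone edge misses our path, but inductive (3) guarantees that $\LAB_{f-1}(z_{j,\ell}, G'_j)$ destroys $P(z_{j,i}, G'_j)$, hence the full path too.

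The main obstacle I expect is purely bookkeeping: the leaves are doubly-indexed $z_{j,i}$ and the intended left-to-right ordering must be tracked carefully in clauses (2) and (3) (the display in (2) appears to reuse the bound variable $i$, and I would first unfold it as ``$\LAB_f(z_{k,\ell})$ for all $(k,\ell)$ lexicographically at or after $(j,i)$'' before applying induction). Once the bookkeeping is settled, each case reduces either to a trivial backbone argument on $P_f$ or to a direct invocation of the corresponding inductive clause inside $G'_j$, so no new structural insight beyond the recursive definition is required.
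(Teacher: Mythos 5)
Your proposal is correct and follows essentially the same route as the paper's proof: induction on $f$, the decomposition $P(z_{j,i}, G_f(d)) = P_f[\Root, u^f_j] \circ Q^f_j \circ P(z_{j,i}, G'_j)$, the divergence-at-$u^f_j$ argument for clauses (1)--(2), the split of "left" leaves into those inside $G'_j$ (handled by the inductive hypothesis) versus those in $G'_k$, $k<j$ (handled by the backbone edge $(u^f_k,u^f_{k+1})$) for clause (3), and the length computation $d + L_{f-1}$ for clause (4). Your lexicographic unfolding of the index clash in clause (2) matches the intended reading, so no substantive difference remains.
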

\begin{proof}
We prove the claims by induction on $f$.
For $f=1$, the lemma holds by construction.
Assume this holds for every $f' \leq f-1$ and consider $G_f(d)$.
Recall that $P_f=[u^f_1, \ldots, u^f_d]$, and let $G'_1, \ldots, G'_d$ be $d$ copies
of the graph $G_{f-1}(\sqrt{d})$, viewed as ordered from left to right,
where $G'_j$ is connected to $u^f_j$. That is, there are disjoint paths $Q^f_j$ 
connecting $u^f_j$ and $\Root(G'_j)$, for every $j\in \{1,\ldots, d\}$.

Consider a leaf vertex $z_{j,i}$, the $i^{th}$ leaf vertex in $G'_j$. By the inductive assumption, there exists a single path $P(z_{j,i}, G'_j)$
between the root $\Root(G'_j)$ and the leaf $z_{j,i}$, for every $j \in \{1,\ldots, d\}$.
We now show that there is a single path between $\Root(G_f(d)) = u^f_1$
and $z_{j,i}$ for every $j\in \{1,\ldots, d\}$.
Since there is a single path $P'$ connecting $\Root(G_f(d))$ and $\Root(G'_j)$ given by $P'=P_f[u^f_1, u^f_j]\circ Q^f_j$, it follows that
$P(z_{j,i}, G_f(d))=P' \circ P(z_{j,i}, G'_j)$ is a unique path in $G_f(d)$.

We now show (2). We first show that $P(z_{j,i}, G_f(d)) \subseteq G \setminus \bigcup_{\ell \geq i ~\mid~ z_{j,\ell} \in \Leaf(G'_j)} LAB_f(z_{j,\ell}, G_f(d))$.  By the inductive assumption,
$P(z_{j,i}, G'_j) \in G \setminus \bigcup_{\ell \geq i} \LAB_{f-1}(z_{j,\ell}, G'_j)$.
Since $\LAB_f(z_{j,i}, G_f(d))=(u^f_{j}, u^f_{j+1}) \circ \LAB_{f-1}(z_{j,i}, G'_j)$,
it remains to show that $e_\ell=(u^f_{\ell}, u^f_{\ell+1})  \notin P'$ for $\ell \geq i$.
Since $P'$ diverges from $P_f$ at the vertex $u^f_j$,
it holds that $e_j, \ldots, e_{d-1} \notin P(z_{j,i}, G_f(d))$. We next complete the proof for every leaf vertex $z_{k,\ell}$ for $z_{k,\ell} \in \Leaf(G'_q)$ for $k > j$ and every $\ell \in \NLeaf(f-1,\sqrt{d})$. The claim holds as the edges of $G'_j$ and $G'_k$ are edge-disjoint, and $e_j, \ldots, e_{d-1} \notin P(z_{j,i}, G_f(d))$. 

Consider claim (3) and a leaf vertex $z_{j,i} \in \Leaf(G'_j)$ for some $j \in \{1,\ldots, d\}$ and $i \in \NLeaf(f-1,\sqrt{d})$. Let $Z_1=\{z_{j,\ell} \in  \Leaf(G'_j) \mid \ell < i\}$ be the set
of leaves to the left of $z_{j,i}$ that belong to $G'_j$, and let
$Z_2=\{z_{k,\ell} \notin  \Leaf(G'_j) \mid j > k\}$ be
the complementary set of leaves to the left of $z_{j,i}$. By the inductive assumption,
$P(z_{j,i}, G'_j) \nsubseteq G \setminus \LAB_{f-1}(z_{j,\ell}, G'_j)$ for every $z_{j,\ell} \in Z_1$.
The claim holds for $Z_1$ as the order of the leaves in $G'_j$ agrees with their order in $G_f(d)$, and $\LAB_{f-1}(z_{k,\ell}, G'_j) \subset \LAB_{f}(z_{k,\ell}, G_f(d))$. 

Next, consider the complementary leaf set $Z_2$ to the left of $z_{j,i}$.
Since for every $z_{k,\ell} \in Z_2$, the divergence point of $P(z_{k,\ell}, G_f(d))$ and $P_f$ is at $u^f_k$
for $k < j$, it follows that $e_k=(u^f_k, u^f_{k+1}) \in P(z_{j,i}, G_f(d))$,
and thus $P(z_{j,i}, G_f(d)) \nsubseteq G \setminus \LAB_f(z_{k,\ell}, G_f(d))$
for every $z_{k,\ell} \in Z_2$. Finally, consider (4). By setting the length of the paths $Q^f_j$ to $d-j+1$ for every $j \in \{1,\ldots, d\}$, we have that $\dist(u^f_1, \Root(G'_j))=d$ for every $j \in [1,d]$. The proof then follows by induction as well, since $|P(z_{j,i}, G'_j)|=|P(z_{k,\ell}, G'_k)|$ for every $k, j \in [1,d]$ and $i,\ell \in [1, \NLeaf(f-1,\sqrt{d}]$.
\end{proof}
Finally, we turn to describe the graph $G^*_f(V, E,W)$ which establishes our
lower bound, where $W$ is a particular bad edge weight function that determines the consistent tie-breaking scheme which provides the lower bound. The graph $G^*_f(V, E,W)$ consists of three components.
The first is the graph $G_{f}(d)$ for $d=\lfloor \sqrt{n/(4f)} \rfloor$.
By Obs. \ref{obs:rel}, $\NodesIn(f,d)=|V(G_{f}(d))|\leq n/2$.
The second component of $G^*_f(V, E,W)$ is a set of vertices
$X=\{x_1, \ldots, x_\chi\}$, where the last vertex of $P_f$, namely, $u^f_d$ is
connected to all the vertices of $X$.
The cardinality of $X$ is $\chi=n-\NodesIn(f,d)-1$.
The third component of $G^*_f(V, E,W)$ is a complete bipartite graph $B$
connecting the vertices of $X$ with the leaf set $\Leaf(G_f(d))$, i.e.,
the disjoint leaf sets $\Leaf(G'_1), \ldots, \Leaf(G'_d)$.
We finally define the weight function $W: E \to (1,1+1/n^2)$. Let $W(e)=1$ for every $e \in E \setminus E(B)$. 
The weights of the bipartite graph edges $B$ are defined as follows. Consider all leaf vertices $\Leaf(G_f(d))$ from left to right given by $\{z_1, \ldots, z_\lambda\}$. Then, $W(z_j,x_i)=(\lambda-j)/n^4$ for every $z_j$ and every $x_i \in X$. 
The vertex set of the resulting graph is thus
$V=V(G_{f}(d))\cup \{v^{*}\} \cup X$ and hence $|V|=n$.
By Prop. (b) of Obs. \ref{obs:rel},
$\NLeaf(G_f(d))=d^{2-1/2^{f-1}}=\Theta((n/f)^{1-1/2^f}),$
hence $|E(B)|=\Theta((n/f)^{2-1/2^f})$.

We now complete the proof of Thm. \ref{thm:cslowerbound} for the single source case.
\begin{proof}[Thm. \ref{thm:cslowerbound} for $|S|=1$.]
Let $s=u^f_1$ be the chosen source in the graph $G^*_f(V, E,W)$. 
We first claim that under the weights $W$, there is a unique shortest path, denoted by $\pi(s,x_i ~\mid~ F)$ for every $x_i \in X$ and every fault set $F \in \{\LAB_f(z_1, G_f(d)), \ldots, \LAB_f(z_\ell, G_f(d))\}$. By Lemma \ref{lem:prop_induc_path}(1), there is a unique shortest path from each $s$ to each $z_j \in \Leaf(G_f(d))$ denoted by $P(z_j, G_f(d))$. 

In addition, by Lemma \ref{lem:prop_induc_path}(4), the unweighted length of all the $s$-$z_j$ paths are the same for every $z_j$. Since each $x_i$ is connected to each $z_j$ with a distinct edge weight in $(1,1+1/n^2)$, we get that each $x_i$ has a unique shortest path from $s$ in each subgraph $G \setminus \LAB_f(z_j, G_f(d))$. Note that since the uniqueness of $\pi$ is provided by the edge weights it is both consistent and stable. Also note that the weights of $W$ are sufficiently small so that they only use to break the ties between equally length paths.

We now claim that a collection of $\{s\} \times X$ replacement paths (chosen based on the weights of $W$) contains all edges of the bipartite graph $B$. Formally, letting 
$$\mathcal{P}=\bigcup_{x_i \in X} \bigcup_{z_j \in \Leaf(G_f(d))} \pi(s,x_i ~\mid~ \LAB_f(z_j, G_f(d)))~,$$
we will show that $E(B) \subseteq \bigcup_{P \in \mathcal{P}} P$ which will complete the proof. 
To see this we show that $\pi(s,x_i ~\mid~ \LAB_f(z_j, G_f(d)))=P(z_j, G_f(d)) \circ (z_j,x_i)$. Indeed, by Lemma \ref{lem:prop_induc_path}(2), we have that $P(z_j, G_f(d))  \subseteq G \setminus \LAB_f(z_j, G_f(d))$. It remains to show that the shortest $s$-$x_i$ path (based on edge weights) in $G \setminus \LAB_f(z_j, G_f(d))$ goes through $z_j$. 
By Lemma \ref{lem:prop_induc_path}(2,3), the only $z_k$ vertices in $\Leaf(G_f(d))$ that are connected to $s$ in $G \setminus \LAB_f(z_j, G_f(d))$ are $\{z_1,\ldots, z_j\}$. Since $W(z_1,x_i) > W(z_2,x_i)> \ldots > W(z_j,x_i)$, we have that $(z_j, x_i)$ is the last edge of $\pi(s,x_i ~\mid~ \LAB_f(z_j, G_f(d)))$. As this holds for every $x_i \in X$ and every $z_j \in \Leaf(G_f(d))$, the claim follows. 
\end{proof}


\paragraph{Extension to multiple sources.}
Given a parameter $\NSource$ representing the number of sources,
the lower bound graph $G$ includes $\NSource$ copies, $G'_1, \ldots, G'_\NSource$, of $G_f(d)$,
where $d=O(\sqrt{(n / 4f\NSource)})$.
By Obs. \ref{obs:rel}, each copy consists of at most $n/2\NSource$ vertices.
We now add to $G$ a collection $X$ of $\Theta(n)$ vertices connected to the $\NSource$ leaf sets
$\Leaf(G'_1), \ldots, \Leaf(G'_\NSource)$ by a complete bipartite graph $B'$. See Fig. \ref{fig:LB-graph-induc} for an illustration.
 We adjust the size of the set $X$ in the construction so that $|V(G)|=n$.
Since $\NLeaf(G'_i)=\Omega((n / (f\NSource))^{1-1/2^f})$ (see Obs. \ref{obs:rel}),
overall $|E(G)| = \Omega(n \cdot \NSource \cdot \NLeaf(G_f(d))) =
\Omega(\NSource^{1/2^f}\cdot (n/f)^{2-1/2^f})$.
The weights of all graph edges not in $B'$ are set to $1$. For every $i \in \{1,\ldots, \NSource\}$, the edge weights of the bipartite graph $B_j=(\Leaf(G'_1),X)$ are set in the same manner as for the single source case.
Since the path from each source $s_i$ to $X$ cannot aid the vertices of $G'_j$
for $j \neq i$, the analysis of the single-source case can be applied
to show that each of the bipartite graph edges in necessary
upon a certain sequence of at most $f$-edge faults.
This completes the proof of Thm. \ref{thm:cslowerbound}.

\begin{figure}[h!]
\begin{center}
\includegraphics[scale=0.40]{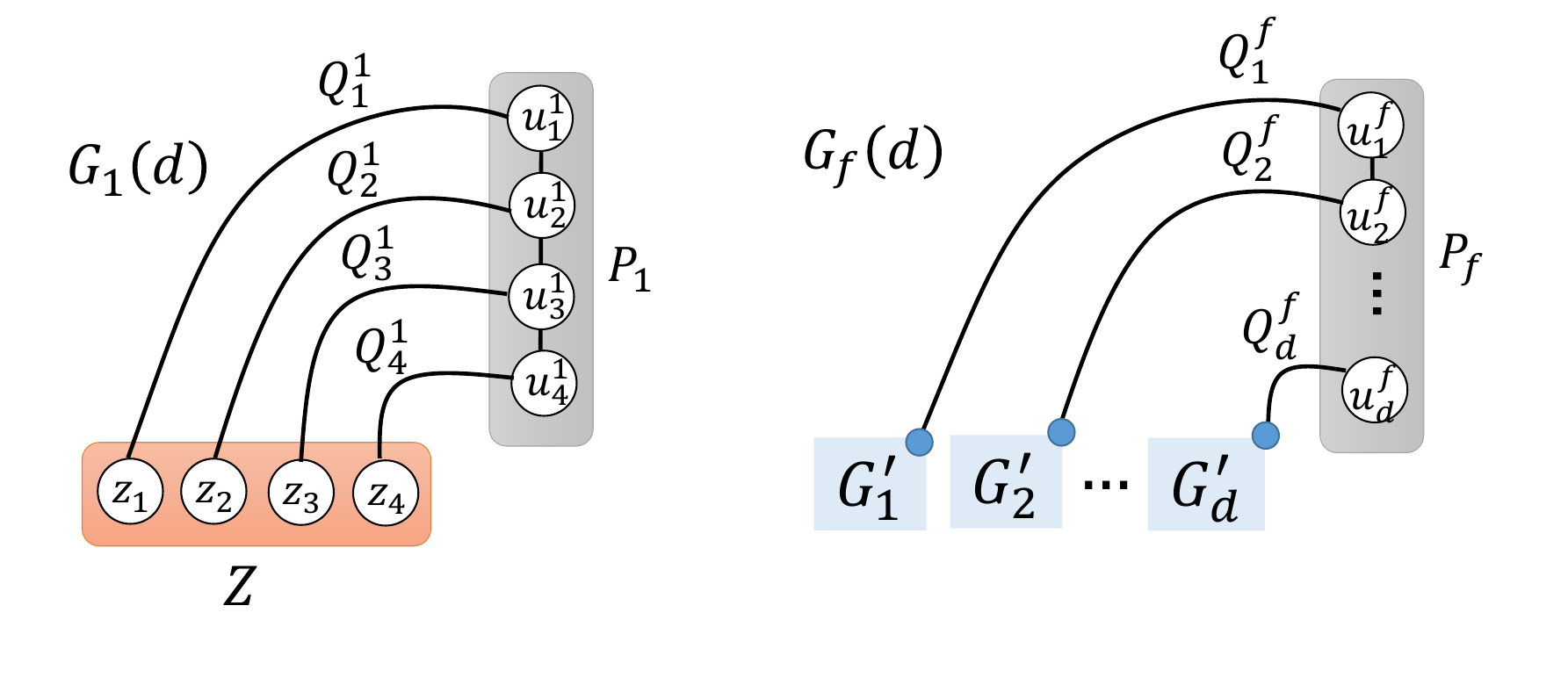}
\includegraphics[scale=0.40]{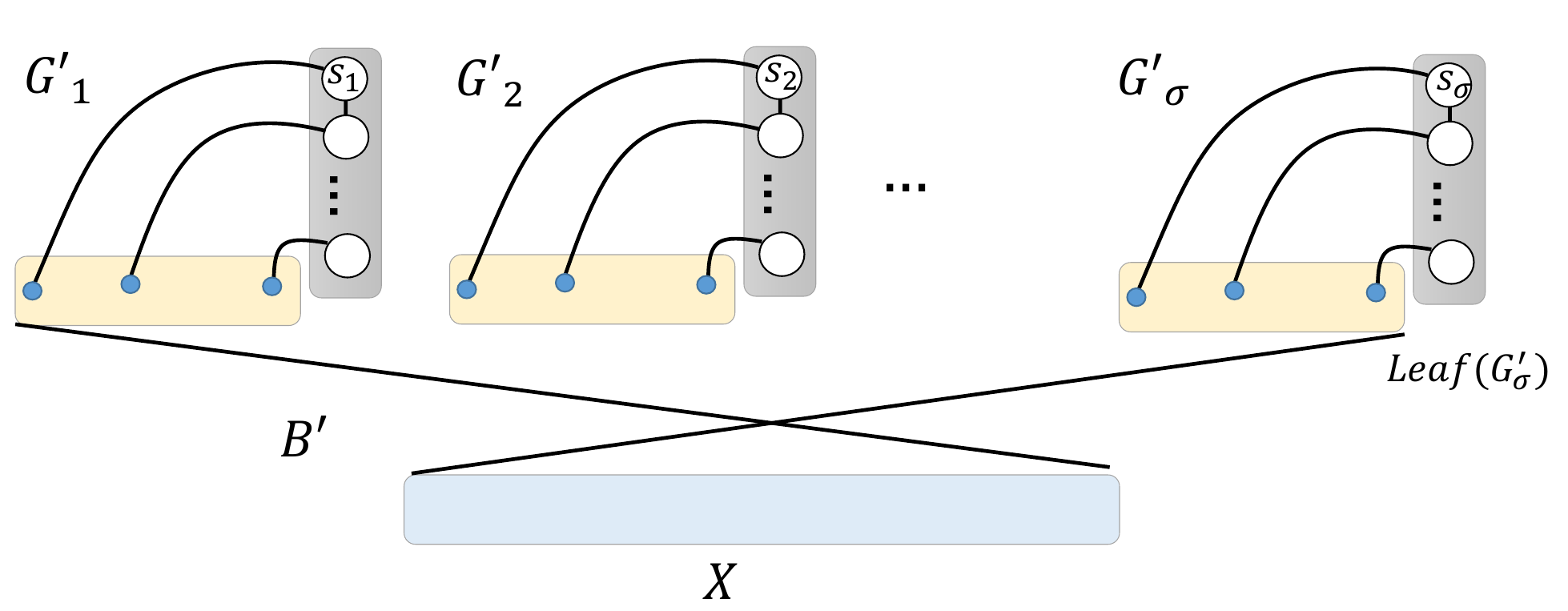}
\caption{\sf Top: Illustration of the graphs $G_1(d)$ and $G_f(d)$. Each graph $G'_i$ is a graph of the form $G_f(\sqrt{d})$. Bottom: Extension to $\sigma$ sources. The collection of leaf nodes $\Leaf(G'_1), \ldots, \Leaf(G'_\NSource)$ are fully connected to a linear size set $X$. The size of the resulting bipartite graph $B'$ dominates the size of the construction. \label{fig:LB-graph-induc}
}
\end{center}
\end{figure}

\begin{figure}[h!]
\begin{center}
\includegraphics[scale=0.40]{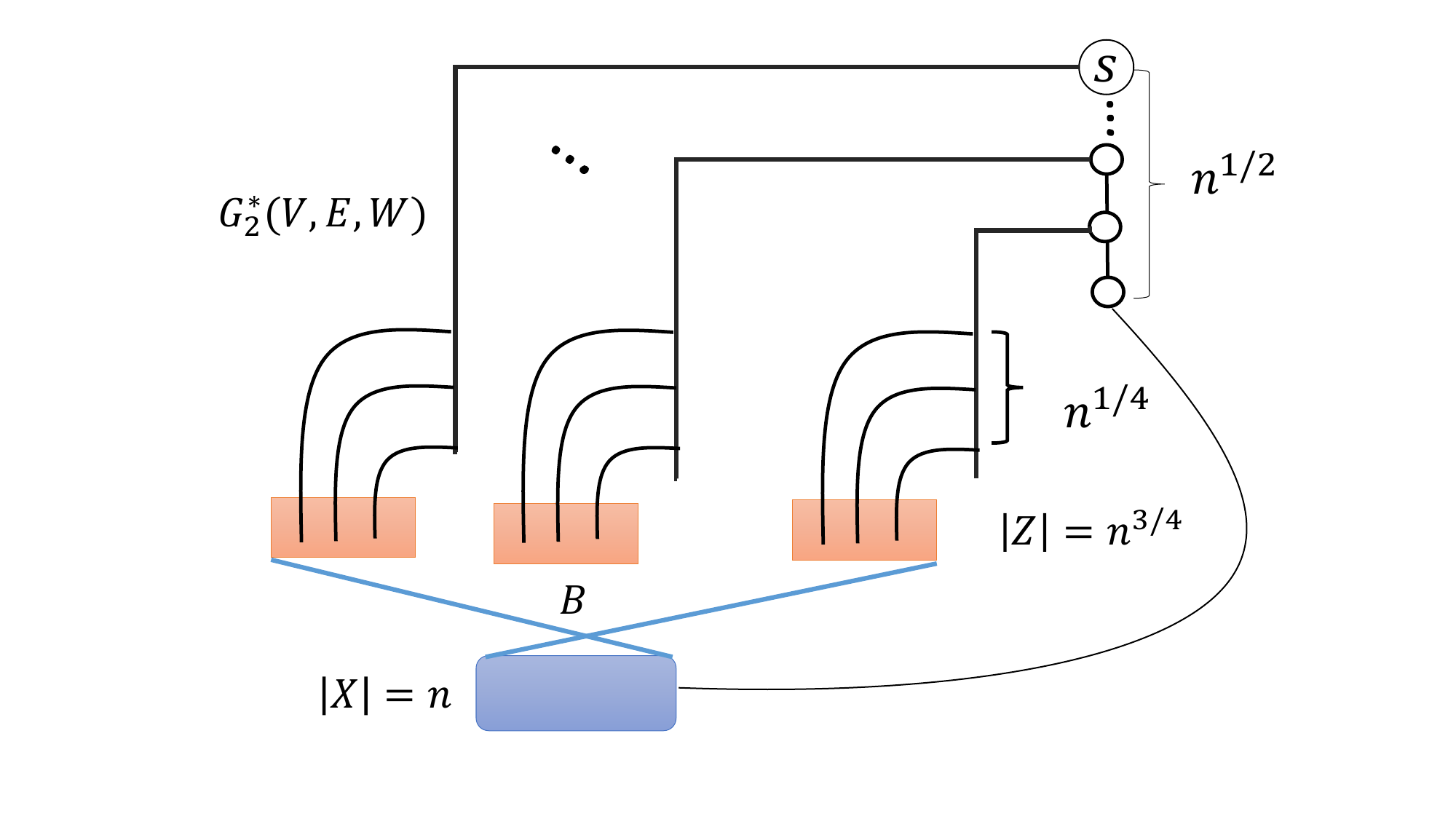}
\caption{\sf Illustration of the lower bound graph $G^*_f(V,E,W)$ for $f=2$. The edge weights of the bipartite graph are monotone increasing as a function of the leaf index from left to right. \label{fig:LB-graph-final}
}
\end{center}
\end{figure}

\end{document}